\documentclass[utf8]{FrontiersinHarvard}

\usepackage{bm}
\usepackage{mathtools}
\usepackage[loadonly]{enumitem}

\usepackage{algorithm}
\usepackage{algorithmic}

\usepackage{url,hyperref,lineno,microtype,subcaption}
\usepackage[onehalfspacing]{setspace}

\usepackage{jhoagg}

\usepackage[capitalize,noabbrev]{cleveref}

\newtheorem{example}{Example}
\newtheorem{assumption}{Assumption}

\newtheorem{proposition}{Proposition}
\newtheorem{theorem}{Theorem}
\usepackage{graphicx}
\usepackage{subcaption}

\newcounter{enumA}
\newcounter{enumO}

\newlist{enumA}{enumerate}{1}
\setlist[enumA,1]{%
  label=(A\arabic*),
  leftmargin=1cm,
  before=\setcounter{enumAi}{\value{enumA}},
  after=\setcounter{enumA}{\value{enumAi}}
}

\newlist{enumO}{enumerate}{1}
\setlist[enumO,1]{%
  label=(O\arabic*),
  leftmargin=1cm,
  before=\setcounter{enumOi}{\value{enumO}},
  after=\setcounter{enumO}{\value{enumOi}}
}

\crefname{equation}{}{}

\newcommand{\SOT}{\mathrm{SO}(3)}
\newcommand{\SET}{\mathrm{SE}(3)}
\newcommand{\ubar}[1]{{\underline{#1}}}

\newcounter{appsection}
\renewcommand{\theappsection}{\Alph{appsection}}

\crefname{appsection}{Appendix}{Appendices}
\Crefname{appsection}{Appendix}{Appendices}

\newcommand{\appsection}[2][]{%
    \refstepcounter{appsection}%
    \section*{\texorpdfstring{APPENDIX \theappsection \\ #2}{Appendix #2}}%
    \addcontentsline{toc}{section}{Appendix \theappsection: #2}%
    \ifx&#1&\else\label{#1}\fi
}

\def\keyFont{\fontsize{8}{11}\helveticabold }

\def\firstAuthorLast{Felipe Arenas-Uribe {et~al.}} 

\def\Authors{First Author\,$^{1,*}$, Co-Author\,$^{2}$ and Co-Author\,$^{1,2}$}

\def\Authors{Felipe Arenas-Uribe\,$^{1,*}$, T. Michael Seigler\,$^{1}$ and Jesse B. Hoagg\,$^{1}$}

\def\Address{$^{1}$Department of Mechanical and Aerospace Engineering, University of Kentucky, Lexington, KY, USA}

\def\corrEmail{felipearur@uky.edu}

\begin{document}
\onecolumn
\firstpage{1}

\title[Safe Landing on Small Celestial Bodies]{Safe Position and Attitude Control for Landing on Small Celestial Bodies with Gravitational Uncertainty} 

\author[\firstAuthorLast ]{\Authors}
\address{}
\correspondance{}

\extraAuth{}

\funding{This work is supported in part by the National Science Foundation (1932105, 2450718) and USDA NIFA (2024-69014-42393).}

\maketitle

\begin{abstract}
Landing on small celestial bodies is challenging because the spacecraft must satisfy safety and actuator constraints despite gravitational forces that are difficult to model accurately.
This article presents an optimal control for safe landing with gravitational uncertainty and strict actuator limits. 
The approach addresses spacecraft pose control with attitude represented on $\SOT$, where there are actuator constraints and state constraints on position, attitude, velocity, and
angular velocity.
The approach combines several key techniques. 
First, the gravitational force uncertainty is estimated using an extended high-gain observer, and we present a new dynamic upper bound on the estimation error. 
This gravitational estimate and dynamic bound are then used to compute optimal control forces and torques that satisfy state and actuator constraints while tracking a landing trajectory. 
Optimal forces and torques are obtained from the closed-form solution to a quadratic program that has a single control barrier function constraint constructed by composing multiple control barrier functions that are designed to enforce each state and actuator constraint.
Finally, an optimal allocation maps control forces and torques to actuator commands that satisfy actuator constraints.
The method is demonstrated in simulation using 2 vehicle configurations with  severe gravitational uncertainty.

\tiny
 \keyFont{\section{Keywords:} Spacecraft control, constrained control, optimal control, control barrier functions, polytope projection} 
\end{abstract}

\section{Introduction}

Small celestial bodies (SCBs) have attracted attention for their scientific value and relevance to planetary defense. 
Sample return missions provide insights into the formation and composition of celestial bodies. 
Example missions include NASA OSIRIS-REx, JAXA Hayabusa, and ESA Rosetta. 
In addition, near-Earth objects motivate active SCB monitoring, and the rapidly developing space economy has generated interest in space mining~\cite[]{modell_space_2025}.

Soft landing on SCBs is a proximity maneuver that requires precise control of position and attitude, while satisfying actuator limits and safety constraints~\cite[]{zhang_coupled_2023}. 
Spacecraft actuator systems are typically overactuated to provide redundancy as a passive fault-tolerant strategy against actuator failures \cite[]{hasan_fault-tolerant_2022}. 
Nevertheless, soft landing is a challenge because of the uncertain exogenous force due to disturbances such as solar radiation pressure and unmodeled gravitational effects~\cite[]{mammarella_controls_2025}. 
Error in the mass estimate for a SCB is typically in the range of 10--15\%~\cite[]{bull_optical_2021}. 
In addition, SCBs have irregular shapes and nonuniform mass distribution. 
To account for these effects, higher-order models can be used~\cite[]{scheeres_orbital_2012}. 
However, accurately determining model parameters requires high-resolution remote sensing or in-situ measurements that are expensive to acquire~\cite[]{penarroya_orbit_2019}. 
Even in the best case, the gravitational field model has residual error, which needs to be addressed by the control system.

Spacecraft guidance and control often uses a modular architecture, where the guidance module generates a reference trajectory and the feedback controller is responsible for following the reference. 
For a review of trajectory generation for SCB landing, see \cite[Section 3.3]{malyuta_advances_2021}. 
Spacecraft pose control is addressed using a variety of representations, including position-attitude on $\BBR^3 \times \SOT$, dual quaternions, and the special Euclidean group $\SET$~\cite[]{zhang_coupled_2023}. 
Spacecraft pose control methods include integrated, adaptive, and disturbance rejection approaches \cite[]{wang_integrated_2019,sun_adaptive_2018,zhang_adaptive_2022,zhang_coupled_2023}.

For position-only control, robust approaches such as $H_\infty$ \cite[]{simplicio_synthesis_2018}, sliding mode \cite[]{fu_orbitattitude_2022}, and adaptive \cite[]{tiwari_direct_2023,sun_adaptive_2018} have been developed. 
Linear extended high-gain observers (EHGO) have also been applied to estimate and compensate for gravitational uncertainties \cite[]{dunham_constrained_2016}. Although these approaches can improve robustness and reference tracking performance, they generally do not provide guarantees of safety constraint satisfaction. 
On the other hand, model-predictive control (MPC) can explicitly incorporate state and input constraints.
MPC methods have been developed for spacecraft proximity operations using classical and dual-quaternion-based formulations \cite[]{tiwari_direct-adaptive_2022,van_leeuwen_nonlinear_2022,sun_dual_2021}. 
However, these methods typically rely on approximations and discretizations of the nonlinear spacecraft dynamics.
Consequently, guaranteeing recursive feasibility and constraint satisfaction often requires constraint-tightening and robustness mechanisms, which can significantly increase computational complexity.

These limitations motivate the use of control barrier functions (CBFs), which provide a framework to enforce state constraints through forward invariance \cite[]{Wieland_CBF_2007,ames_control_2019}. 
CBF methods for spacecraft have been developed for orbit transfers~\cite[]{ong_intermittent_2023}, proximity operations~\cite[]{breeden_robust_2023}, and formation flying~\cite[]{kamat_electromagnetic_2026}. 
However, uncertainty in the gravitational field makes SCB landing a challenging application for CBF methods. 
Although robust CBF approaches \cite[]{jankovic_robust_2018} address model uncertainty, synthesizing a valid CBF subject to actuator constraints is difficult, and model uncertainty only exacerbates this challenge. 
An alternative approach is to estimate the model uncertainty using a disturbance observer~\cite[]{das_robust_2022,alan_disturbance_2023,sun_safety-critical_2024,alan_generalizing_2025,wang_safety-critical_2025,das_robust_2025}. 
However, these approaches rely on bounds on disturbance derivative to characterize observer error, and more importantly, they do not address the challenge of actuator constraints.

This article presents a new control method for safe landing on a SCB with gravitational uncertainty.
The approach addresses spacecraft pose control with attitude represented on $\SOT$, where there are strict state constraints on position (i.e., glideslope constraint), attitude (i.e., boresight constraint), velocity, and angular velocity, as well as strict actuator constraints. 
This safe landing approach relies on several key building blocks.  
First, \cref{sec:Constraint_Projection} uses a polytope projection to map the set of admissible actuator inputs into a lower-dimensional set of admissible forces and torques. 
This process transforms the control variable from actuator inputs to control forces and torques, and transforms the actuator constraints into equivalent constraints on the control force and torque. 
This process is analogous to \cite{durham_attainable_1994}.
This section then presents an optimal control allocation to map admissible forces and torques back to admissible actuator commands. 
Next, \cref{sec:FeedbackControl} presents a nonlinear EHGO to estimate gravitational uncertainty. 
Notably, we develop a dynamic upper bound on the estimation error, which is critical for guaranteeing safety. 
This section also uses the gravitational force estimate to develop desired control forces and torques for tracking the reference trajectory, which is assumed to be generated and periodically updated by a guidance module.

\cref{sec:CBFControl} presents the main contribution of this article, which is an optimal feedback control for safe landing that satisfies all state and actuator constraints. 
Although the desired control forces and torques achieve landing, they do not generally satisfy state and actuator constraints. 
Thus, we present a method inspired from \cite[]{rabiee_closed-form_2024,rabiee_composition_2024} to construct a composite CBF  that can be used to ensure all state and actuator constraints are satisfied. 
The approach in \cite[]{rabiee_closed-form_2024,rabiee_composition_2024} does not address model uncertainty, so we extend the method to address gravitational uncertainty by using the dynamic upper bound on the estimation error (from \cref{sec:FeedbackControl}). 
Note that the dynamic bound is less conservative than a standard robust CBF approach~\cite[]{das_robust_2025}. 
Then, the desired forces and torques and the composite CBF are used to construct a closed-form optimal control that achieves landing while satisfying state and actuator constraints. 
This safe and optimal control is the solution to a quadratic program (QP) that can be expressed in closed form, which is computationally efficient for on-spacecraft implementation. 

Finally, \cref{sec:Simulations} presents simulation studies using 2 vehicle configurations: a robotic lander similar to the Hayabusa2 mission \cite[]{ho_mascot_2021}, and a crewed landing module similar to the Apollo lunar module \cite[]{nasa1969lunarmodule}. 
The results demonstrate safe landing under severe gravitational uncertainty where traditional feedback control strategies fail. 
Some preliminary results related to this article appeared in the conference article \cite[]{arenas-uribe_safe_2025}. 
However, this article goes beyond the preliminary work by addressing position-and-attitude control rather than position-only control.

\section{Notation}

Physical vectors are denoted with bold symbols (e.g., $\bm{r}$). 
A frame is a collection of mutually orthogonal physical unit vectors. 
If $\SF$ is a frame, then $[\bm{r}]_{\SF}$ is $\bm{r}$ resolved in $\SF$. 
The time derivative of $\bm{r}$ with respect to the frame $\SF$ is denoted by $^{\SF}\dot{\bm{r}}$.

The special orthogonal group $\SOT$ is the set of orthogonal matrices in $\BBR^{3 \times 3}$ with determinant one. 
The geodesic distance between $P \in \SOT$ and $Q \in \SOT$ is 
\begin{equation*}
    d(P,Q) \triangleq \mathrm{arccos} \frac{\mathrm{tr}\, P^{\top}Q-1}{2} \in [0,\pi],
\end{equation*}
where $\mathrm{tr}$ is the trace. 
Note that $d(P,Q)=0$ if and only if $P=Q$.

The set of skew-symmetric matrices in $\BBR^{3 \times 3}$ is denoted by $\mbox{so}(3)$. 
Let $\omega \in \mathbb{R}^3$.
Then, $[\omega]_{\times} \in \mbox{so}(3)$ is the skew-symmetric representation of $\omega$, and $(\cdot)^{\vee}$ is the inverse of $[\cdot]_{\times}$, that is, $([\omega]_{\times})^{\vee} = \omega$.

Let $h \colon \BBR^n \rightarrow \BBR^l$ be continuously differentiable. 
Then, $h^\prime \colon \BBR^n \rightarrow \BBR^{l \times n}$ is defined as $h'(x) \triangleq \frac{\partial h (x)}{\partial x}$. 
The Lie derivatives of $h$ along the vector field of $\psi:\BBR^n \rightarrow \BBR^{n \times l}$ is $L_{\psi}h \triangleq h'(x) \psi(x)$. 
Let $\| \cdot \|$ denote the 2-norm. The boundary of $\mathcal{A} \subseteq \BBR^n$ is denoted by $\text{bd }\mathcal{A}$.

\section{Problem Formulation}\label{sec:ProblemFormulation}

\subsection{Equations of Motion}

Consider a SCB (referred to as an asteroid) and a nearby spacecraft (see \Cref{fig:Asteroid_Spacecraft}). 
We assume that the asteroid's mass center is an inertial reference point satisfying Newton's second law.
Let $\SF_{\rm I}$ be an inertial frame. 
Frame $\SF_{\rm A}$ is fixed to the asteroid, and frame $\SF_{\rm B}$ is fixed to the spacecraft. 
The angular velocity of $\SF_{\rmA}$ relative to $\SF_{\rmI}$ is $\bm{\omega}_{\rm A}$. 
We assume that there are no external moments acting on the asteroid (i.e., $^{\SF_{\rm I}}\dot{\bm{\omega}}_{\rm A} = \bm{0}$), and $\SF_{\rm I}$ is oriented such that $[\bm \omega_{\rm A}]_{\SF_\rmA} = [\bm \omega_{\rm A}]_{\SF_\rmI}$. 
Let $\bm{\omega}_{\rm B}$ be the angular velocity of $\SF_{\rm B}$ relative to $\SF_{\rm I}$, and let $\bm{\omega}_{\rm{B/A}}$ be the angular velocity of $\SF_{\rmB}$ relative to $\SF_{\rmA}$.

\begin{figure}[ht]
    \centering
    \includegraphics[width=0.35\columnwidth]{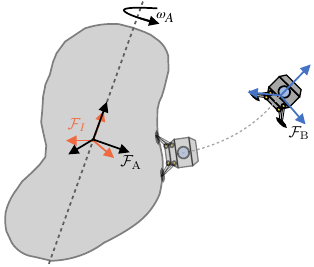}
    \caption{Spacecraft, asteroid, and 3 frames.}
    \label{fig:Asteroid_Spacecraft}
\end{figure}

Let $\omega \triangleq [\bm{\omega}_{\rm B/A}]_{\SF_{\rm B}}$ and $R:[0,\infty) \to \SOT$ be the rotation matrix of $\SF_{\rm B}$ relative to $\SF_{\rm A}$, which satisfies
\begin{equation} \label{eq:Attitude_dyn}
    \dot{R}(t) = R(t)[\omega(t)]_{\times},
\end{equation}
where $R(0) = R_{0}$.

Let $\bm{J}_{\rmB}$ be the spacecraft's physical inertia matrix relative to its mass center, and let $\bm{T}$ be the resultant torque acting on the spacecraft due to the spacecraft actuators. 
The spacecraft's angular momentum relative to its mass center is
\begin{equation*}
    \bm{H} \triangleq \bm{J}_{\rm B}\bm{\omega}_{\rm B}.
\end{equation*}
Since $\bm{\omega}_{\rm B} = \bm{\omega}_{\rm A} + \bm{\omega}_{\mathrm{B/A}}$ and $^{\SF_{\rm I}}\dot{\bm{\omega}}_{\rm A} = \bm{0}$, the time derivative of $\bm H$ with respect to $\SF_{\rm I}$ is
\begin{equation*}
    ^{\SF_{\rm I}} \dot{\bm{H}} = \bm{J}_{\rm B} (\, ^{\SF_{\rm B}} \dot{\bm{\omega}}_{\mathrm{B/A}} + (\bm{\omega}_{\rm A} + \bm{\omega}_{\mathrm{B/A}}) \times \bm{\omega}_{\mathrm{B/A}}) + (\bm{\omega}_{\rm A} + \bm{\omega}_{\mathrm{B/A}}) \times \bm{J}_{\rm B} (\bm{\omega}_{\rm A} + \bm{\omega}_{\mathrm{B/A}}).
\end{equation*}
Thus, Euler's rotation law implies
\begin{equation*}
    \bm{J}_{\rm B} (\, ^{\SF_{\rm B}} \dot{\bm{\omega}}_{\mathrm{B/A}} + (\bm{\omega}_{\rm A} + \bm{\omega}_{\mathrm{B/A}}) \times \bm{\omega}_{\mathrm{B/A}}) + (\bm{\omega}_{\rm A} + \bm{\omega}_{\mathrm{B/A}}) \times \bm{J}_{\rm B} (\bm{\omega}_{\rm A} + \bm{\omega}_{\mathrm{B/A}}) = \bm{T},
\end{equation*}
which is resolved in $\SF_{\rm B}$ as
\begin{equation} \label{eq:angvel_dyn}
    J_{\rm B}\dot{\omega}(t) + J_{\rm B}[R(t)^\top \omega_{\rm A} + \omega(t)]_{\times} \omega(t) + [R(t)^\top \omega_{\rm A} + \omega(t)]_{\times}J_{\rm B} \big(R(t)^\top \omega_{\rm A} + \omega(t)\big) = T(t),
\end{equation}
where $\omega_{\rm A} \triangleq [\bm{\omega}_{\rm A}]_{\SF_{\rm A}}$, $J_{\rmB} \triangleq [\bm{J}_{\rmB}]_{\SF_{\rmB}}$, $T \triangleq [\bm{T}]_{\SF_{\rm B}}$, and $\omega(0)=\omega_{0}$.

The position and velocity of the spacecraft's mass center relative to the asteroid's mass center are $\bm{r}$ and $\bm{v} \triangleq {}^{\SF_{\rm I}}\dot{\bm{r}}$. 
The spacecraft mass $m>0$ is assumed constant. 
Define $r \triangleq [\bm{r}]_{\SF_{\rm A}}$, and let $\bm{G}(\bm{r})$ be the mass-normalized asteroid's gravitational force on the spacecraft. 
Let $\bm{F}$ be the resultant force acting on the spacecraft due to the spacecraft actuators. 
Thus, Newton's second law implies
\begin{equation*}
    m \, ^{\SF_{\rm I}} \dot{\bm{v}} = m\bm{G}(\bm{r}) + \bm{F},
\end{equation*}
which is resolved in $\SF_{\rm A}$ as
\begin{align}
\dot{r}(t) &= v(t), \label{eq:pos_dyn} \\
\dot{v}(t) &= -2[\omega_{\rm A}]_{\times}v(t) -[\omega_{\rm A}]_{\times}^2r(t) + G(r) + \frac{1}{m}R(t)F(t), \label{eq:vel_dyn}
\end{align}
where $v \triangleq [\bm{v}]_{\SF_{\rm A}}$, $[\bm{G}(\bm{r})]_{\SF_{\rm A}} =  G(r)$, $F \triangleq [\bm{F}]_{\SF_{\rm B}}$, $r(0) = r_0$, and $v(0) = v_0$.

The gravitational force $G$ is uncertain because of  uncertainty in mass and mass distribution.
Let $G_{\rm m}: \mathbb{R}^3 \to \mathbb{R}$ be a known gravitational model, which is 2-times continuously differentiable. 
Then, the gravitational model error is
\begin{equation*}
    \Delta(r) \triangleq G(r) - G_{\rm m}(r).
\end{equation*}
We make the following technical assumption. 

\begin{assumption}\label{ass:Bounded_Dist}
\rm
$\Delta$ and $\Delta^\prime$ are bounded. 
\end{assumption}
The Newtonian point-mass model is one possible choice for $G_{\rm m}$. 
In cases where more information about the asteroid geometry and composition is available, spherical harmonic models can be a more accurate representation \cite[]{scheeres_orbital_2012}.

\subsection{Asteroid Soft-Landing Problem}

The equations of motion  \cref{eq:pos_dyn,eq:vel_dyn,eq:Attitude_dyn,eq:angvel_dyn} can be written as
\begin{equation} \label{eq:system_dynamics}
    \dot{x}(t) = f(x(t)) + g(x(t))\mu(t) + D w(x(t)),
\end{equation}
where $n=18$ and 
\begin{gather}
x \triangleq \begin{bmatrix} r \\ v \\ \mathrm{vec} \, R \\ \omega \end{bmatrix}, \quad
x_0 \triangleq \begin{bmatrix} r_0 \\ v_0 \\ \mathrm{vec} \, R_0 \\ \omega_0 \end{bmatrix}, \quad 
\mu \triangleq \begin{bmatrix} F \\ T\end{bmatrix}, \label{eq:system_dynamics.2} \\
f(x) \triangleq 
    \begin{bmatrix}
        v\\
        -2[\omega_{\rm A}]_{\times}v -[\omega_{\rm A}]_{\times}^2r + G_{\rm m}(r) \\
        \mathrm{vec}\, R[\omega]_{\times}\\
        \scriptstyle{-[R^\top \omega_{\rm A} + \omega]_{\times} \omega - J_{\rm B}^{-1} \big( [R^\top \omega_{\rm A} + \omega]_{\times}J_{\rm B}(R^\top \omega_{\rm A} + \omega) \big)}
    \end{bmatrix}, \label{eq:system_dynamics.3} \\
g(x) \triangleq 
    \begin{bmatrix}
        0_{3 \times 3} & 0_{3 \times 3}\\
        \frac{1}{m}R & 0_{3 \times 3}\\
        0_{9 \times 3} & 0_{9 \times 3}\\
        0_{3 \times 3} & J_{\rm B}^{-1}
    \end{bmatrix}, \quad 
D \triangleq \begin{bmatrix}
                0_{3 \times 3} \\
                I_{3 \times 3} \\
                0_{9 \times 3} \\
                0_{3 \times 3}
            \end{bmatrix},\quad
w(x) \triangleq \Delta(r).\label{eq:system_dynamics.4}
\end{gather}
The control force $F$ and control torque $T$ depend on $m \geq 6$ actuator inputs (e.g., thrusters) denoted by $u(t) \in \SU$, where the set of admissible actuator inputs $\SU \subseteq \BBR^m$ is a convex polytope and 
\begin{equation}\label{eq:allocation}
    \mu = M u, 
\end{equation}
where $M\in\mathbb{R}^{6\times m}$ and $\mathrm{rank} \, M=6$, which implies that the system is fully actuated or overactuated.

The soft-landing maneuver is a controlled touchdown on the surface of the asteroid, where the spacecraft achieves a final position $r_{\rm f} \in \BBR^3$ and final attitude $R_{\rm f} \in \SOT$ with zero translational and angular velocity--all within predefined tolerances.
Let $\theta_{\rm a} \in (0,\pi]$ be the allowable final attitude error, and let $r_{\rm a},v_{\rm a},\omega_{\rm a}>0$ be the allowable landing radius, final speed, and final angular speed, respectively. 
Then, the set of allowable landing states is 
\begin{equation*}
    L \triangleq \{ x \in \BBR^n: \|r-r_{\rm f}\| \leq r_{\rm a}, \|v\| \leq v_{\rm a},
    d(R, R_{\rm f}) \leq \theta_{\rm a}, \|\omega\| \leq \omega_{\rm a} \}.
\end{equation*}
We say that the solution $x$ to \cref{eq:system_dynamics,eq:system_dynamics.2,eq:system_dynamics.3,eq:system_dynamics.4} \textit{lands} if there exists $t_{\rm l} \in [0, \infty)$ such that $x(t_{\rm l}) \in L$.

In addition to landing, the spacecraft must satisfy state (e.g., safety) and actuator constraints. 
Let $\theta_{\rm gs} \in (0, \frac{\pi}{2}]$ be the landing glideslope angle, and consider the glideslope barrier function 
\begin{equation}
    \psi_1(x) \triangleq \hat{e}_{\rm s}^{\top}(r-\beta r_{\rm f}) 
    - \|r-\beta r_{\rm f}\| \cos\theta_{\rm gs}, 
\end{equation}
where $\hat{e}_s \in \BBR^3$ is the perpendicular unit vector at the landing position $r_\rmf$ resolved in $\SF_{\rm A}$, and $\beta \triangleq 1 - \frac{r_{\rm a}}{\|r_{\rm f}\| \tan \theta_{\rm gs}}$ is a relaxation of the landing position based on the allowable landing radius $r_{\rm a}$. 
The zero-superlevel set of $\psi_1$ is 
\begin{equation*}
    \Psi_1 \triangleq \{ x \in \BBR^n \colon \psi_1(x) \ge 0 \},
\end{equation*}
which is the set of states that satisfy the glideslope constraint.
In this case, the spacecraft is in the allowable approach cone as shown in \cref{fig:gs_constraint}.

Next, let $\theta_{\rm b} \in (0, \frac{\pi}{2}]$ be the boresight angle, and consider the boresight barrier function 
\begin{equation}
\psi_2(x) \triangleq \hat{e}_{\rm b}(R)^{\top}(\beta r_{\rm f}-r) - \| \beta r_{\rm f} - r\| \cos\theta_{\rm b},
\end{equation}
where $\hat{e}_{\rm b}(R) \triangleq -R \begin{bmatrix}0 & 0 & 1\end{bmatrix}^{\top}$ is the boresight direction. 
The zero-superlevel set of $\psi_2$ is 
\begin{equation*}
    \Psi_2 \triangleq \{ x \in \BBR^n \colon \psi_2(x) \ge 0 \},
\end{equation*}
which is the set of states that satisfy the boresight constraint.
In this case, the spacecraft is in the allowable boresight cone, where its attitude is such that it points at the landing position as shown in \cref{fig:bs_constraint}.

\begin{subfigure}[ht]
    \centering
    \begin{minipage}[b]{0.45\textwidth}
        \centering
        \includegraphics[width=0.7\linewidth]{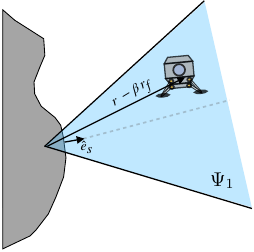}
        \caption{Zero-superlevel set of $\psi_1$.}
        \label{fig:gs_constraint}
    \end{minipage}%
    \hspace{0.01\textwidth}
    \begin{minipage}[b]{0.45\textwidth}
        \centering
        \includegraphics[width=0.7\linewidth]{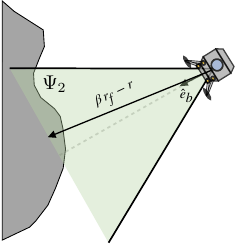}
        \caption{Zero-superlevel set of $\psi_2$.} 
        \label{fig:bs_constraint}
    \end{minipage}
    \hfill
    
    \setcounter{subfigure}{-1}
    \caption{Glideslope and boresight constraints and their zero-superlevel sets.}
    \label{fig:state_constraints}
\end{subfigure}

We also consider constraints on the translational speed and angular speed. 
Let $v_{\rm m}>v_{\rm a}$ be the maximum allowable translational speed, and let $\omega_{\rm m}>\omega_{\rm a}$ be the maximum allowable angular speed. 
Then, consider the barrier functions 
\begin{align}
\psi_3(x) &\triangleq \varphi(\|r-r_{\rm f}\|^2)v_{\rm m}^2 + \left (1 - \varphi(\|r-r_{\rm f}\|^2) \right )v_{\rm a}^2 - \|v\|^2, \label{eq:psi_3}\\
\psi_4(x) &\triangleq \varphi(\|r-r_{\rm f}\|^2)\omega_{\rm m}^2 + \left ( 1 - \varphi(\|r-r_{\rm f}\|^2) \right )\omega_{\rm a}^2 - \|\omega\|^2, \label{eq:psi_4}
\end{align}
where $\varphi: [0, \infty) \to [0,1]$ is 2-times continuously differentiable and nondecreasing such that $\varphi(0)=0$ and $\lim_{a \to \infty} \varphi(a) = 1$. 
As the spacecraft approaches the landing position, $\varphi$ smoothly transitions the speed limits from the maximum allowable values to the allowable values at landing.
The following example provides one possible choice for $\varphi$.

\begin{figure}[t]
    \begin{center}
    \includegraphics[scale=1.1]{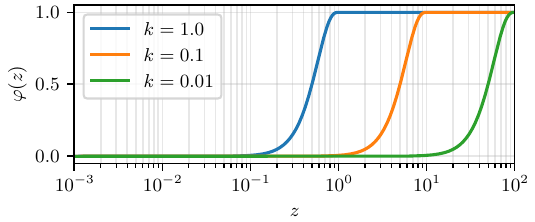}
    \end{center}
    \caption{$\varphi$ given by \Cref{example:transition_func} for 3 values of $k$, which determines the distance at which the speed limits are given by maximum allowable values $v_{\rm m}$ and $\omega_{\rm m}$.}
    \label{fig:Transition_function_example}
\end{figure}

\begin{example}\rm \label{example:transition_func}
Let $k > 0$, and consider
\begin{equation*}
    \varphi(z) = 
    \begin{cases} 
        kz - \frac{\sin 2\pi kz }{2\pi}, & 0 \leq z < \frac{1}{k}, \\
        1, & z \geq \frac{1}{k}.
    \end{cases}
\end{equation*}
\Cref{fig:Transition_function_example} shows this example of $\varphi$ for 3 different values of $k$. 
\end{example}

The set of states that satisfy the translational speed and angular speed constraints are given by 
\begin{equation*}
\Psi_3 \triangleq \{ x \in \BBR^n \colon \psi_3(x) \ge 0 \}, \quad 
\Psi_4 \triangleq \{ x \in \BBR^n \colon \psi_4(x) \ge 0 \}.
\end{equation*}
Thus, the set of states that satisfy all state constraints is given by
\begin{equation*}
    \Psi \triangleq \bigcap_{i\in \{1,2,3,4\} } \Psi_i.
\end{equation*}
Then, the control objectives are summarized as:
\begin{enumO}
    \item \, Solution $x$ lands. \label{Obj:landing} 
    \item \, For all $t \in [0,\infty)$, $x(t) \in \Psi$. \label{Obj:safety}
    \item \, For all $t \in [0,\infty)$, $u(t) \in \SU$. \label{Obj:actuator_input}
\end{enumO}
Together, these objectives imply the spacecraft lands while satisfying all state and actuator input constraints throughout the trajectory despite the uncertainty in the gravitational force. 
The approach in this article can be directly extended to include additional state constraints by introducing additional barrier functions (e.g., $\psi_5,\psi_6,\ldots$) whose zero-superlevel sets model the allowable states. 
We assume that for all $x_0 \in \Psi$ and continuous $\mu$, \cref{eq:system_dynamics,eq:system_dynamics.2,eq:system_dynamics.3,eq:system_dynamics.4} is forward complete.

To accomplish \ref{Obj:landing}--\ref{Obj:actuator_input}, we consider a reference trajectory for the nominal system \cref{eq:system_dynamics,eq:system_dynamics.2,eq:system_dynamics.3,eq:system_dynamics.4,eq:allocation} with $w=0$. 
Specifically, let $x_{\rm r} : [0, \infty) \to \BBR^{n}$ and $u_{\rm r} : [0, \infty) \to \BBR^{m}$ denote the state and control of the reference trajectory.
We assume $(x,u) = (x_{\rm r}, u_{\rm r})$ satisfies the nominal differential equation \cref{eq:system_dynamics,eq:system_dynamics.2,eq:system_dynamics.3,eq:system_dynamics.4,eq:allocation} with $w=0$ and satisfies \ref{Obj:landing}--\ref{Obj:actuator_input}.
In other words, the reference trajectory satisfies the objectives with the nominal gravity model.
However, the reference trajectory generally does not satisfy the objective with nonzero gravity uncertainty.
This nominal open-loop reference trajectory can be generated using a variety of trajectory optimization methods (e.g., \cite{elango_continuous-time_2025,sagliano_six-degree--freedom_2024,reynolds_dual_2020}); however, these trajectory optimization methods are not computationally efficient for recomputing the reference in real time. 
Thus, this article presents a real-time feedback control for satisfying \ref{Obj:landing}--\ref{Obj:actuator_input} by tracking the reference trajectory while deviating as needed to address uncertainty and satisfy constraints. 
We make the technical assumptions that $\mu_\rmr \triangleq M u_\rmr$ is continuously differentiable.

\section{Projection of Actuator Inputs into Control Forces and Torques}
\label{sec:Constraint_Projection}

This section transforms the control variable from actuator inputs $u$ to control forces and torques $\mu$. 
First, we project the set $\SU$ of admissible actuator inputs to obtain the set of admissible control forces and torques, and we develop a computable representation of this set.
Next, we present a quadratic program to obtain the optimal actuator input from control forces and torques.

\subsection{Admissible Control Forces and Torques}

The set of admissible control forces and torques is 
\begin{equation}\label{eq:Phi_def}
    \Phi \triangleq \{ M u : u \in \SU \},
\end{equation}
and note that $u(t) \in \SU$ if and only if $\mu(t) \in \Phi$. 
Hence, \ref{Obj:actuator_input} is equivalent to the condition that for all $t \in [0,\infty)$, $\mu(t) \in \Phi$.

Since $\SU$ is a convex polytope, it follows from \cite[Thm. 19.3]{rockafellar_convex_2015} that $\Phi$ is a convex polytope.
Thus, there exist $a_1, \ldots, a_\ell \in \BBR^6$ and $b_1, \ldots, b_\ell \in \BBR$, where $\ell$ is a positive integer, such that 
\begin{equation}\label{eq:virtual_input_set_final}
    \Phi = \{ \mu \in \BBR^6 : \phi_1(\mu) \geq 0, \ldots, \phi_\ell(\mu)\geq 0\},
\end{equation}
where for $i \in \{1,\ldots,\ell\}$,
\begin{equation}\label{cons:virtual_input_constraints}
    \phi_i(\mu) \triangleq a_i^\top \mu + b_i.
\end{equation}
In other words, \cite[Thm. 19.3]{rockafellar_convex_2015} implies that the half-plane constraints on the actuator input can be transformed into half-plane constraints on the control force and torque. 
\Cref{fig:polytope_proj} illustrates the linear projection of $\SU$ through $M$ to obtain $\Phi$.
In fact, this linear projection can be performed using the beneath-and-beyond algorithm and double-description algorithm in \cite[]{fukuda_double_1996}. 
First, beneath-and-beyond is used to obtain the set of vertices $\SV_u$ of $\SU$ from the half-space representation. 
Then, $\SV_\mu \triangleq \{ Mu \colon u \in \SV_u \}$ contains the set of vertices of $\Phi$. 
Finally, double-description is used to obtain the minimal half-space representation \cref{eq:virtual_input_set_final,cons:virtual_input_constraints} from $\SV_\mu$. 
Note that an upper bound on the number $\ell$ of half-spaces is given by \cite[Thm 8.23]{ziegler_lectures_1995}.

\begin{figure}[ht]
    \centering
    \includegraphics[width=0.25\columnwidth]{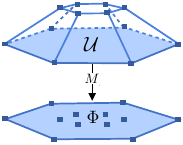}
    \caption{The linear projection of $\SU$ through the affine mapping \cref{eq:allocation} is the set $\Phi$.}
    \label{fig:polytope_proj}
\end{figure}

\subsection{Optimal Actuator Input}

To map from admissible control forces and torques to admissible actuator inputs, consider the point-to-set map $\Upsilon \colon \Phi \rightrightarrows \SU$ given by 
\begin{equation}\label{eq:Upsilon}
    \Upsilon(\mu) \triangleq \{ u \in \SU : \mu = M u \}.
\end{equation}
The following result shows that this set is a convex polytope. 

\begin{proposition}\rm\label{prop:feasibility}
    For all $\mu \in \Phi$, $\Upsilon(\mu)$ is a nonempty convex polytope.
\end{proposition}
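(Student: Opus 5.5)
Fix $\mu \in \Phi$. The argument splits into nonemptiness, which follows directly from the definition of $\Phi$, and the polytope property, which follows from writing $\Upsilon(\mu)$ as an intersection of $\SU$ with an affine subspace.

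\emph{Nonemptiness.} By \cref{eq:Phi_def}, $\mu \in \Phi$ means there exists $u \in \SU$ with $Mu = \mu$. By \cref{eq:Upsilon}, this $u$ belongs to $\Upsilon(\mu)$, so $\Upsilon(\mu) \neq \emptyset$.

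\emph{Polytope structure.} Write the affine subspace
\begin{equation*}
\mathcal{H}_\mu \triangleq \{u \in \BBR^m : Mu = \mu\},
\end{equation*}
so that $\Upsilon(\mu) = \SU \cap \mathcal{H}_\mu$. Since $\SU$ is a convex polytope, it has a half-space representation $\SU = \{u : Cu \le d\}$ for some matrix $C$ and vector $d$. The equality $Mu = \mu$ is equivalent to the pair of inequalities $Mu \le \mu$ and $-Mu \le -\mu$. Hence
\begin{equation*}
\Upsilon(\mu) = \{u : Cu \le d,\; Mu \le \mu,\; -Mu \le -\mu\}
\end{equation*}
is a finite intersection of closed half-spaces, that is, a polyhedron. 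Any intersection of convex sets is convex, so $\Upsilon(\mu)$ is convex.

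\emph{Boundedness.} The only step that needs care is whether ``polytope'' in the paper means a bounded polyhedron, since then boundedness must be checked. If $\SU$ is a polytope in that sense, it is bounded. Because $\Upsilon(\mu) \subseteq \SU$, $\Upsilon(\mu)$ is bounded as well. It is therefore a bounded, nonempty polyhedron, which is a nonempty convex polytope by the Minkowski–Weyl theorem. If the paper instead uses ``polytope'' for a possibly unbounded polyhedron, the half-space representation above already gives the claim. In either reading, closedness is immediate because every defining set is closed.
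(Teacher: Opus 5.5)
Your proof is correct and follows the same route as the paper: nonemptiness comes directly from the definition of $\Phi$, and the polytope property comes from writing $\Upsilon(\mu)$ as the intersection of the admissible-input polytope with the affine subspace $\{u : Mu = \mu\}$. Your extra steps, encoding the equality as two inequalities and inheriting boundedness from the input polytope, spell out what the paper's one-line argument leaves implicit.
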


\begin{proof}
Let $\mu \in \Phi$, and it follows from \cref{eq:Phi_def} that there exists $u \in \SU$ such that $\mu=Mu$. 
Hence, \cref{eq:Upsilon} implies that $u \in \Upsilon(\mu)$, which implies $\Upsilon(\mu)$ is nonempty. 
Next, since $\Upsilon(\mu)$ is the intersection of the convex polytope $\SU$ with the affine subspace defined by the linear equality constraint \cref{eq:allocation}, it follows that $\Upsilon(\mu)$ is a convex polytope.
\hfill $\blacksquare$
\end{proof}

Next, let $Q \in \BBR^{m \times m}$ be positive definite and $c \in \BBR^m$.
Then, the optimal-actuator-input map is $u_*: \Phi \to \SU$ given by
\begin{equation}\label{eq:opt_allocation}
    u_*(\mu) \triangleq \operatorname*{argmin}_{ \bar{u} \in \Upsilon(\mu)} \bar{u}^{\top}Q\bar{u} + c^\top \bar{u}.
\end{equation}
For all $\mu \in \Phi$, the optimizer $u_*(\mu)$ exists and is unique because $\Upsilon(\mu)$ is nonempty and convex and $Q$ is positive definite. 
Letting $Q=I_m$ and $c=0$ implies that $u_*(\mu)$ is the minimum-norm actuator input in $\SU$ that yields control forces and torques $\mu$.

The quadratic program \cref{eq:opt_allocation} combined with the computable convex polytope \cref{eq:virtual_input_set_final,cons:virtual_input_constraints} allows us to address \ref{Obj:actuator_input} in the space $\Phi$.
Then, \cref{eq:opt_allocation} is solved to obtain the optimal actuator input. 
Hence, the remainder of this article treats $\mu$ as the control variable.

\section{Gravity Estimator, and Desired Control Force and Torque}
\label{sec:FeedbackControl}

This section presents an EHGO to estimate the gravitational uncertainty and uses this estimate to construct the desired control force and torque that achieve \ref{Obj:landing}.

\subsection{Gravity Estimator}

To compensate for gravitational uncertainty, we estimate $w$ using an EHGO method, which is similar to \cite[]{dunham_constrained_2016}. 
Let $\tau > 0$, and let $\nu: [0, \infty) \to \BBR^3$ satisfy 
\begin{equation} \label{eq:EHGO_Dynamics}
        \dot{\nu}(t) = -\tau \Big  [a(x(t)) + \frac{1}{m}R(t)F(t) + \hat{w}(x(t), \nu(t)) \Big],
\end{equation}
where $\nu(0) = \nu_0 \in \BBR^3$ and
\begin{align}
a(x) &\triangleq -2[\omega_{\rm A}]_{\times}v -[\omega_{\rm A}]_{\times}^2r + G_{\rm m}(r), \\
\hat{w}(x, \nu) &\triangleq \tau v + \nu.
\end{align}
Note that $\hat{w}$ is the estimate of $w$, and the error is $\tilde{w}(x, \nu) \triangleq w(x) - \hat{w}(x, \nu)$.
Next, let $\delta: [0,\infty) \to \BBR$ satisfy
\begin{equation}\label{eq:est_error_var_dyn}
\dot{\delta}(t) = -\tau \delta(t) + \frac{c_\delta}{\tau}\|v(t)\|^2,
\end{equation}
where $\delta(0) = \delta_0$ and $c_\delta>0$ is such that for all $r \in \BBR^3$, $c_\delta \geq \|\Delta'(r)\|^2$. 
Next, define
\begin{equation}\label{eq:xi}
    \xi(x, \nu, \delta) \triangleq \delta - \|\tilde{w}(x,\nu)\|^2. 
\end{equation}
The following result shows that $\delta$ is an upper bound on the error $\| \tilde w \|^2$. 
The proof is in \cref{app:Proofs}.

\begin{proposition}\label{prop:Dist_est_conv}\rm
Consider \cref{eq:system_dynamics,eq:system_dynamics.2,eq:system_dynamics.3,eq:system_dynamics.4,eq:EHGO_Dynamics,eq:est_error_var_dyn}, where Assumption~\ref{ass:Bounded_Dist} is satisfied and $\mu$ is continuous. 
Let  $\xi(x_0, \nu_0, \delta_0) \ge 0$.
Then, for all $t\in[0,\infty)$, $\xi(x(t), \nu(t), \delta(t)) \ge 0$. 
\end{proposition}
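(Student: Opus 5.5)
First compute the error dynamics. Since $\hat w = \tau v + \nu$, using \cref{eq:vel_dyn} written as $\dot v = a(x) + \frac{1}{m}RF + w(x)$ together with \cref{eq:EHGO_Dynamics}, we get
\begin{equation*}
\dot{\hat w} = \tau\big(a + \tfrac{1}{m}RF + w\big) - \tau\big(a + \tfrac{1}{m}RF + \hat w\big) = \tau \tilde w .
\end{equation*}
Moreover $w(x(t)) = \Delta(r(t))$, so $\frac{d}{dt} w = \Delta'(r) v$. Hence
\begin{equation*}
\dot{\tilde w} = -\tau \tilde w + \Delta'(r) v .
\end{equation*}
This is the key structural fact: the error is a stable linear filter driven by $\Delta'(r)v$. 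Continuity of $\mu$ and forward completeness guarantee that all signals are defined and absolutely continuous on $[0,\infty)$.

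Next, differentiate $\|\tilde w\|^2$:
\begin{equation*}
\frac{d}{dt}\|\tilde w\|^2 = -2\tau\|\tilde w\|^2 + 2\tilde w^\top \Delta'(r) v .
\end{equation*}
Young's inequality $2a^\top b \le \tau\|a\|^2 + \frac{1}{\tau}\|b\|^2$ gives
\begin{equation*}
2\tilde w^\top\Delta' v \le \tau\|\tilde w\|^2 + \tfrac{1}{\tau}\|\Delta'(r)\|^2\|v\|^2 \le \tau\|\tilde w\|^2 + \tfrac{c_\delta}{\tau}\|v\|^2 ,
\end{equation*}
where the last step uses the definition of $c_\delta$; Assumption~\ref{ass:Bounded_Dist} ensures such a finite $c_\delta$ exists. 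Therefore
\begin{equation*}
\frac{d}{dt}\|\tilde w\|^2 \le -\tau\|\tilde w\|^2 + \tfrac{c_\delta}{\tau}\|v\|^2 .
\end{equation*}

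Subtracting this from \cref{eq:est_error_var_dyn} gives, along solutions,
\begin{equation*}
\dot\xi \ge -\tau\delta + \tau\|\tilde w\|^2 = -\tau \xi .
\end{equation*}
By the comparison lemma (or by noting that $\frac{d}{dt}(e^{\tau t}\xi) \ge 0$), $\xi(t) \ge e^{-\tau t}\xi(0) \ge 0$ for all $t \ge 0$.

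The main care point is the exact choice of Young weights so that $\xi$ satisfies a closed linear inequality. The split into $\tau$ and $1/\tau$ above is precisely what matches the $-\tau\delta$ and $c_\delta/\tau$ terms in \cref{eq:est_error_var_dyn}. Beyond that, the argument is routine.
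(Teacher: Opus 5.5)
Your proof is correct and follows essentially the same route as the paper. The paper computes $\dot\xi$ directly, which implicitly uses $\dot{\tilde w} = \dot w - \tau\tilde w$. It then applies $\tau\|\tilde w\|^2 - 2\tilde w^\top \dot w \ge -\|\dot w\|^2/\tau$, which is your Young's inequality with the same $\tau$, $1/\tau$ weights, bounds $\|\dot w\|^2 = \|\Delta'(r)v\|^2 \le c_\delta\|v\|^2$, and concludes from $\dot\xi \ge -\tau\xi$.
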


\Cref{prop:Dist_est_conv} is important because the computable bound $\delta$ is used to guarantee state and input constraints despite the gravity uncertainty.

\subsection{Desired Control Force}

This section presents a desired control force that tracks the reference $(x_{\rm r}, \mu_{\rm r})$ and compensates for the uncertainty $w$. 
Let $x_{\rm r}$ and $\mu_{\rm r}$ be written as 
\begin{equation*}
    x_{\rm r} = \begin{bmatrix} r_{\rm r} \\ v_{\rm r} \\ \mathrm{vec} \,R_{\rm r} \\ \omega_{\rm r} \end{bmatrix}, \qquad \mu_{\rm r} = \begin{bmatrix} F_{\rm r} \\ T_{\rm r} \end{bmatrix},
\end{equation*}
where $r_{\rm r}(t) \in \BBR^3$, $v_{\rm r}(t) \in \BBR^3$, $R_{\rm r}(t) \in \SOT$,  $\omega_{\rm r}(t) \in \BBR^3$, $F_{\rm r}(t) \in \BBR^3$ and $T_{\rm r}(t) \in \BBR^3$.

Define the position and velocity error 
\begin{equation*}
e_r(t,x) \triangleq r- r_{\rm r}(t),\qquad e_v(t,x) = v - v_{\rm r}(t).
\end{equation*}
Then, consider the desired control force
\begin{equation}
F_{\rm d}(t, x, \nu) \triangleq m R^\top \Big( -a(x) - \hat{w}(x,\nu) + a(x_{\rm r}(t)) + \frac{1}{m}R_{\rm r}(t)F_{\rm r}(t)- K_ve_v(t,x) - K_re_r(t,x) \Big),\label{eq:Fd}
\end{equation}
where $K_{r}, K_v \in \BBR^{3 \times 3}$ are positive definite.
The following result provides convergence properties for $e_r$ and $e_v$. 
The proof is in \cref{app:Proofs}.

\begin{proposition}\rm\label{lemma:position_control_stability}
Consider \cref{eq:system_dynamics,eq:system_dynamics.2,eq:system_dynamics.3,eq:system_dynamics.4,eq:EHGO_Dynamics,eq:est_error_var_dyn}, where Assumption~\ref{ass:Bounded_Dist} is satisfied, $T$ is continuous, and $F = F_{\rm d}$. 
Let $\xi(x_0, \nu_0, \delta_0) \geq 0$, and assume that for all $t\geq0$, $\|v(t)\| \leq v_{\rm m}$.
Then, there exists $\vartheta > 0$ such that
    \begin{equation*}
        \left \| \begin{bmatrix} e_r(t,x(t)) \\ e_v(t,x(t)) \end{bmatrix} \right \| \to \left [ 0, \frac{\vartheta}{\tau} \right ] \quad \text{as} \quad t \to \infty.
    \end{equation*}
\end{proposition}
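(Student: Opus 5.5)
Substituting $F=F_{\rm d}$ into \cref{eq:vel_dyn} will reduce the translational error dynamics to a linear system driven by the estimation error. Since $RR^\top = I_3$,
\begin{equation*}
\dot v = a(x) + w(x) + \tfrac{1}{m}RF_{\rm d} = w(x) - \hat w(x,\nu) + a(x_{\rm r}) + \tfrac{1}{m}R_{\rm r}F_{\rm r} - K_v e_v - K_r e_r .
\end{equation*}
The reference satisfies the nominal dynamics with $w=0$, so $\dot v_{\rm r} = a(x_{\rm r}) + \frac{1}{m}R_{\rm r}F_{\rm r}$. Hence, with $e \triangleq [e_r^\top \ e_v^\top]^\top$,
\begin{equation*}
\dot e = A e + B \tilde w, \qquad A \triangleq \begin{bmatrix} 0 & I_3 \\ -K_r & -K_v \end{bmatrix}, \quad B \triangleq \begin{bmatrix} 0 \\ I_3 \end{bmatrix},
\end{equation*}
where $\tilde w = \tilde w(x(t),\nu(t))$. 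The attitude subsystem and the choice of $T$ do not enter, apart from continuity ensuring solutions exist. The matrix $A$ is Hurwitz because $K_r,K_v$ are positive definite: if $Ax=\lambda x$ with $x=[x_1^\top\ x_2^\top]^\top$, then $x_2=\lambda x_1$ and $(\lambda^2 I + \lambda K_v + K_r)x_1 = 0$. Multiplying by $x_1^*$ gives a scalar quadratic with positive coefficients, so $\mathrm{Re}\,\lambda<0$. Therefore there exists $P=P^\top>0$ with $A^\top P + PA = -I$.

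Next I will bound $\tilde w$ using \cref{prop:Dist_est_conv}. Since $\xi(x_0,\nu_0,\delta_0)\ge 0$ and $\mu$ is continuous, $\|\tilde w(t)\|^2 \le \delta(t)$ for all $t\ge 0$. From \cref{eq:est_error_var_dyn} and the hypothesis $\|v(t)\|\le v_{\rm m}$,
\begin{equation*}
\delta(t) \le e^{-\tau t}\delta_0 + \frac{c_\delta v_{\rm m}^2}{\tau^2}\big(1-e^{-\tau t}\big),
\end{equation*}
so $\limsup_{t\to\infty}\|\tilde w(t)\| \le \sqrt{c_\delta}\, v_{\rm m}/\tau$. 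Note that $\delta_0\ge \|\tilde w(0)\|^2\ge 0$.

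The conclusion then follows from standard ISS estimates for the Hurwitz linear system. With $V=e^\top P e$,
\begin{equation*}
\dot V \le -\|e\|^2 + 2\|P\|\,\|e\|\,\|\tilde w\|,
\end{equation*}
so $\dot V<0$ whenever $\|e\| > 2\|P\|\,\|\tilde w\|$. The same conclusion follows directly from the variation-of-constants formula with $\|e^{At}\|\le \kappa e^{-\lambda t}$, which gives
\begin{equation*}
\limsup_{t\to\infty}\|e(t)\| \le \frac{\kappa}{\lambda}\,\limsup_{t\to\infty}\|\tilde w(t)\|.
\end{equation*}
Either way, $\limsup_{t\to\infty}\|e(t)\|\le \vartheta/\tau$ with $\vartheta \triangleq \kappa\sqrt{c_\delta}\,v_{\rm m}/\lambda$, or $2\|P\|\sqrt{\lambda_{\max}(P)/\lambda_{\min}(P)}\,\sqrt{c_\delta}\,v_{\rm m}$ from the Lyapunov route. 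This is the claimed convergence of $\|e\|$ to $[0,\vartheta/\tau]$, interpreted as the limsup of $\|e\|$ being at most $\vartheta/\tau$; I would prove the slightly stronger statement that $\|e\|$ enters any neighborhood of that interval.

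I expect the main obstacle to be the ultimate-bound step. One must ensure that $\vartheta$ does not depend on $\tau$, which holds because $A$ depends only on the gains $K_r,K_v$. One must also handle the transient term $e^{-\tau t}\delta_0$ correctly, which I would do by splitting the convolution integral at a large time $T_0$.
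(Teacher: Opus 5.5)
Your proposal is correct and follows essentially the paper's proof: you use the same linear error system $\dot e = Ae + B\tilde w$ with $A$ Hurwitz, the Lyapunov equation $A^\top P + PA = -I$, and \Cref{prop:Dist_est_conv} combined with the explicit bound $\delta(t)\le \delta_0 e^{-\tau t} + c_\delta v_{\rm m}^2(1-e^{-\tau t})/\tau^2$. The only difference is the closing step: the paper completes the square (choosing $\varsigma$ so that $c_1\neq\tau$) to get $\dot V \le -c_1 V + c_2/\tau^2 + c_3 e^{-\tau t}$ and solves the comparison equation explicitly, whereas your ISS/limsup gain applied to $\|\tilde w\|\le\sqrt{\delta}$ avoids that case distinction and makes it clearer that $\vartheta$ does not depend on $\tau$.
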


\Cref{lemma:position_control_stability} shows that the desired control force \cref{eq:Fd} yields arbitrarily small steady-state error based on design of the EHGO gain $\tau$.

\subsection{Desired Control Torque}

This section presents a desired control torque that tracks the reference $(x_{\rm r}, \mu_{\rm r})$. 
Define the rotation and angular velocity errors
\begin{equation*}
    e_{R}(t,x) \triangleq \frac{1}{2\sqrt{1 + \mathrm{tr}[R_{\rmr}(t)^\top R]}}(R_{\rmr}(t)^\top R - R^\top R_{\rmr}(t))^\vee, \qquad
    e_{\omega}(t,x) \triangleq \omega - R^\top R_{\rmr}(t)\omega_{\rmr}(t).
\end{equation*}
Then, consider the desired control torque 
\begin{equation}\label{eq:Td}
    T_{\rm d}(t, x) \triangleq J_B \Big( [R^{\top}R_{\rm r}(t)\omega_{\rm r}(t)]_{\times} J_B R^{\top}R_{\rm r}(t)\omega_{\rm r}(t) + J_B R^{\top}R_{\rm r}(t)\dot{\omega}_{\rm r}(t) -K_R e_R(t,x) - K_{\omega}e_{\omega}(t,x) \Big),
\end{equation}
where $K_R,K_\omega \in \BBR^{3 \times 3}$ are positive definite.
The next result is from \cite[Proposition 3]{lee_exponential_2012} and demonstrates that the errors converge to zero.

\begin{proposition}\rm\label{lemma:attitude_control_stability}
Consider \cref{eq:Attitude_dyn,eq:angvel_dyn}, where $T = T_{\rm d}$. 
Then, $\lim_{t \to \infty} e_R(t,x(t)) =0$ and $\lim_{t \to \infty} e_\omega(t,x(t)) =0$. 
\end{proposition}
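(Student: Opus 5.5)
The plan is to reduce the claim to the geometric tracking result of \cite[Proposition 3]{lee_exponential_2012}. First write the closed-loop error dynamics on $\SOT$, then use a Lyapunov function built from $e_R$ and $e_\omega$.

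\textbf{Closed-loop error dynamics.} Let $R_e \triangleq R_{\rm r}^\top R$, so that $e_R = \frac{1}{2\sqrt{1+\mathrm{tr}\,R_e}}(R_e - R_e^\top)^\vee$.
\begin{itemize}
\item From \cref{eq:Attitude_dyn} and $\dot R_{\rm r} = R_{\rm r}[\omega_{\rm r}]_\times$ (the reference satisfies the nominal dynamics), we get $\dot R_e = R_e[e_\omega]_\times$.
\item Next, differentiate $e_\omega = \omega - R_e^\top\omega_{\rm r}$ and substitute \cref{eq:angvel_dyn} with $T=T_{\rm d}$ from \cref{eq:Td}.
\item The aim is to show that the feedforward terms cancel the gyroscopic and transport terms, leaving $J_{\rm B}\dot e_\omega = -K_R e_R - K_\omega e_\omega$, or a positive-definite rescaling of it. The exact scaling of the gains by $J_{\rm B}$ in \cref{eq:Td} is immaterial to the structure.
\item A matching derivative formula holds for $e_R$: $\dot e_R = E(R_e)e_\omega$, where $E$ is bounded and $e_R^\top E e_\omega = \tfrac12 e_R^\top e_\omega$-type identities apply, as in \cite{lee_exponential_2012}.
\end{itemize}

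\textbf{Lyapunov argument.} Take
\begin{equation*}
\mathcal{V} = \tfrac12 e_\omega^\top J_{\rm B} e_\omega + k\,\Psi_R(R_e) + c\, e_R^\top J_{\rm B} e_\omega,
\end{equation*}
where $\Psi_R = 2 - \sqrt{1+\mathrm{tr}\,R_e}$ is the configuration error function whose gradient gives this $e_R$.
\begin{itemize}
\item With $K_R$ entering as the attitude stiffness, choose $c>0$ small.
\item Then $\mathcal{V}$ is positive definite in $(e_R,e_\omega)$ on the sublevel set where $\mathrm{tr}\,R_e > -1$.
\item Its derivative is negative definite, bounded by $-z^\top W z$ with $z=[\|e_R\|,\|e_\omega\|]^\top$.
\end{itemize}
This gives exponential convergence of $e_R$ and $e_\omega$ to $0$ on that domain, hence the stated limits. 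Boundedness of $J_{\rm B}$ and $J_{\rm B}^{-1}$, together with continuity of the reference, justifies the comparison bounds.

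\textbf{Main obstacle.} The hardest part is twofold.
\begin{itemize}
\item First, the cancellation in the error dynamics must be verified exactly. The asteroid-rate terms $R^\top\omega_{\rm A}$ in \cref{eq:angvel_dyn} must be checked, and may need to be absorbed by treating $\omega$ as relative velocity and using the reference's nominal consistency.
\item Second, the domain issue: $e_R$ is singular at $\mathrm{tr}\,R_e=-1$, so the convergence holds for initial attitude errors below $\pi$, which is implicitly assumed as in \cite{lee_exponential_2012}.
\end{itemize}
I would state that the result is inherited from \cite[Proposition 3]{lee_exponential_2012} once the error dynamics are shown to take the above form, with $\mathcal{V}$ and the small-$c$ selection taken directly from that reference.
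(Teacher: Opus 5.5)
Your route is the paper's: the paper gives no argument beyond citing \cite[Proposition 3]{lee_exponential_2012}. The gap is the one step you defer, namely showing that the closed loop takes Lee's form. For $T_{\rm d}$ as defined in \cref{eq:Td} that step fails, and it cannot be absorbed as you suggest. Write $\Omega \triangleq R^\top\omega_{\rm A}+\omega$ and $\bar\omega_{\rm r} \triangleq R^\top R_{\rm r}\omega_{\rm r}$. Differentiating $e_\omega$ with $\dot R_{\rm r}=R_{\rm r}[\omega_{\rm r}]_\times$ and substituting \cref{eq:angvel_dyn,eq:Td} gives
\begin{equation*}
J_{\rm B}\dot e_\omega = -J_{\rm B}K_R e_R - J_{\rm B}K_\omega e_\omega + J_{\rm B}[\bar\omega_{\rm r}]_\times J_{\rm B}\bar\omega_{\rm r} + (J_{\rm B}^2-J_{\rm B})R^\top R_{\rm r}\dot\omega_{\rm r} - J_{\rm B}[\Omega]_\times\omega - [\Omega]_\times J_{\rm B}\Omega + J_{\rm B}[\omega]_\times\bar\omega_{\rm r}.
\end{equation*}
The $J_{\rm B}$ prefactor in \cref{eq:Td} is not a harmless rescaling of the gains, because it also multiplies the feedforward terms. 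Moreover, $T_{\rm d}$ contains neither $\omega_{\rm A}$ nor $T_{\rm r}$, so neither the relative-velocity viewpoint nor nominal consistency of the reference can cancel the gyroscopic and transport terms.

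Concretely, take a reference with $\omega_{\rm r}\equiv 0$, i.e., attitude held fixed relative to the asteroid. At $e_R=0$, $e_\omega=0$ the right-hand side equals $-[R_{\rm r}^\top\omega_{\rm A}]_\times J_{\rm B}R_{\rm r}^\top\omega_{\rm A}$. This is nonzero unless $R_{\rm r}^\top\omega_{\rm A}$ is a principal axis of $J_{\rm B}$, so zero error is not an equilibrium. If $e_R,e_\omega\to 0$, then $J_{\rm B}\dot e_\omega$ would tend to a nonzero constant, which is impossible. Hence no choice of $\mathcal V$ can yield the stated limits with this $T_{\rm d}$.

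To obtain a valid proof you must change the controller, for example to
\begin{equation*}
T = [\Omega]_\times J_{\rm B}\Omega + J_{\rm B}[\Omega]_\times\omega - J_{\rm B}[\omega]_\times\bar\omega_{\rm r} + J_{\rm B}R^\top R_{\rm r}\dot\omega_{\rm r} - k_R e_R - k_\omega e_\omega .
\end{equation*}
This gives exactly $J_{\rm B}\dot e_\omega=-k_Re_R-k_\omega e_\omega$. Together with your (correct) kinematics $\dot R_e=R_e[e_\omega]_\times$, this is verbatim the setting of Lee's result.

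Two hypotheses must then be stated rather than left implicit:
\begin{itemize}
\item Lee's proposition uses scalar gains. With a matrix $K_R\neq kI$, the term $e_\omega^\top(kI-K_R)e_R$ survives in $\dot{\mathcal V}$ independently of $c$. So ``choose $c$ small'' no longer works, and you need an extra condition on the gains or a matrix-weighted error function.
\item Convergence holds only for initial conditions with $\mathrm{tr}(R_{\rm r}(0)^\top R_0)>-1$ and $\|e_\omega(0)\|$ bounded in terms of $k_R$ and $J_{\rm B}$. The unconditional statement omits this.
\end{itemize}
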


Let $\mu_{\rm d}: [0,\infty) \times \BBR^{n} \times \BBR^3 \to \BBR^6$ be given by
\begin{equation}
    \mu_{\rm d}(t, x, \nu) \triangleq \begin{bmatrix}
    F_{\rm d}(t, x, \nu)\\
    T_{\rm d}(t, x)
    \end{bmatrix}. \label{eq:ud}
\end{equation}
\Cref{lemma:position_control_stability,lemma:attitude_control_stability} imply that for a sufficiently large $\tau$, \ref{Obj:landing} is satisfied with  $\mu=\mu_\rmd$.
However, $\mu_{\rm d}$ does not necessarily satisfy \ref{Obj:safety}.
Furthermore, $\mu_{\rm d}$ is not necessarily in $\Phi$, which implies that it cannot necessarily be generated by a control $u$ that satisfies \ref{Obj:actuator_input}.

\section{Optimal Control for Safe Landing} \label{sec:CBFControl}

This section presents a minimum-intervention control to satisfy \ref{Obj:safety} and \ref{Obj:actuator_input}. We first derive higher-order barrier functions to express both objectives as forward invariance of a set. Next, we construct a composite CBF which is then used to compute a control that ensures forward invariance of such set.

\subsection{Control Dynamics and Barrier Functions}

Since $\phi_1,\ldots,\phi_{\ell}$ depend explicitly on $\mu$, they are not valid candidate CBFs. 
We adopt the method in \cite[]{rabiee_closed-form_2024,rabiee_composition_2024} to transform these input constraints into controller-state constraints by introducing control dynamics. 
Let the control force and torque $\mu$ be the solution to the integrator
\begin{equation}\label{eq:control_dynamics}
    \dot{\mu}(t) = \zeta(t),
\end{equation}
where $\mu(0) = \mu_0$ is the initial condition and $\zeta: [0,\infty) \to \BBR^6$ is the surrogate control, which is the input to the integrator \cref{eq:control_dynamics}.

The cascade of 
\cref{eq:system_dynamics,eq:system_dynamics.2,eq:system_dynamics.3,eq:system_dynamics.4,eq:EHGO_Dynamics,eq:est_error_var_dyn,eq:EHGO_Dynamics,eq:est_error_var_dyn,eq:control_dynamics} of all system states can be expressed as 
\begin{equation}\label{eq:Augmented_Dynamics}
    \dot{\tilde{x}} = \tilde{f}(\tilde{x})+\tilde{B}\zeta + \tilde{D}w(x)
\end{equation}
where $\tilde{n}= 28$ and
\begin{gather}
\tilde{x} \triangleq \begin{bmatrix} x \\ \mu \\ \nu \\ \delta \end{bmatrix},\quad
\tilde{x}_0 \triangleq \begin{bmatrix} x_0 \\ \mu_0 \\ \nu_0 \\ \delta_0 \end{bmatrix}, \label{eq:Augmented_Dynamics_1}\\
\tilde{f}(\tilde{x}) \triangleq \begin{bmatrix}
        f(x) + g(x)\mu \\
        0_{6 \times 1} \\
        -\tau \Big  [a(x) + \frac{1}{m}RF + \hat{w}(x, \nu) \Big] \\
        -\tau \delta + \frac{c_\delta}{\tau}\|v\|^2
    \end{bmatrix},\quad
\tilde{B} \triangleq \begin{bmatrix} 0_{18 \times 6} \\ I_{6 \times 6} \\ 0_{3 \times 6} \\ 0_{1 \times 6} \end{bmatrix}, \quad
\tilde{D} \triangleq\begin{bmatrix}D \\ 0_{6 \times 3} \\ 0_{3 \times 3} \\ 0_{1 \times 3}\end{bmatrix}.
\label{eq:Augmented_Dynamics_2}
\end{gather}

Since $\mu$ is the state of \cref{eq:control_dynamics}, the input constraints are now state constraints of the cascade \cref{eq:Augmented_Dynamics,eq:Augmented_Dynamics_1,eq:Augmented_Dynamics_2}. 
Note that $\phi_1,\ldots,\phi_{\ell}$ have relative degree one with respect to \cref{eq:Augmented_Dynamics,eq:Augmented_Dynamics_1,eq:Augmented_Dynamics_2} with input $\zeta$. 
In contrast, $\psi_1$ and $\psi_2$ have relative degree 3 with respect to \cref{eq:Augmented_Dynamics,eq:Augmented_Dynamics_1,eq:Augmented_Dynamics_2} with input $\zeta$, and $\psi_3$ and $\psi_4$ have relative degree 2.
Thus, we adopt a higher-order approach (e.g., \cite{tan_high-order_2022}) to construct higher-order candidate CBFs.

First, we address $\psi_1$ and $\psi_2$, which are relative degree 3 with respect to $\zeta$.
Let $\beta_{1},\beta_{2},\gamma_{1},\gamma_{2} \colon \BBR \to \BBR$ be extended class-$\mathcal{K}$ functions, where $\beta_{1},\beta_{2}$ are 3-times continuously differentiable and $\gamma_1,\gamma_2$ are 2-times continuously differentiable. 
Then, for $i \in \{1,2\}$, consider $\psi_{i,1},\psi_{i,2} \colon \BBR^{\tilde n} \to \BBR$ defined by
\begin{align}
\psi_{i,1}(\tilde{x}) &\triangleq L_{\tilde{f}} \psi_{i}(\tilde{x}) + \beta_{i}(\psi_{i}(\tilde{x})),\\
\psi_{i,2}(\tilde{x}) &\triangleq L_{\tilde{f}} \psi_{i,1}(\tilde{x}) + \psi_{i,1}^\prime(\tilde{x}) \tilde D w(x) + \gamma_{i}(\psi_{i,1}(\tilde{x})).\label{eq:psi_i2}
\end{align}
Next, we address $\psi_3$ and $\psi_4$, which are relative degree~2 with respect to $\zeta$.
Let $\beta_{3},\beta_{4} \colon \BBR \to \BBR$ be extended class-$\mathcal{K}$ and 2-times continuously differentiable functions. 
Then, for $i \in \{3,4\}$, consider $\psi_{i,1} \colon \BBR^{\tilde n} \to \BBR$ defined by
\begin{equation}
\psi_{i,1}(\tilde{x}) \triangleq L_{\tilde{f}} \psi_{i}(\tilde{x}) + \psi_{i}^\prime(\tilde{x}) \tilde D w(x) + \beta_{i}(\psi_{i}(\tilde{x})).\label{eq:psi_i1}
\end{equation}

The intersection of the zero-superlevel sets of the higher-order barrier functions is given by
\begin{align}
\SP &\triangleq \{ \tilde x \in \BBR^{\tilde n} \colon \, \psi_{1,2}(\tilde{x}) \ge 0, \psi_{2,2}(\tilde{x}) \ge 0, \psi_{3,1}(\tilde{x}) \ge 0,  \psi_{4,1}(\tilde{x}) \ge 0, \phi_1(\mu) \ge 0, \ldots, \phi_{\ell}(\mu) \ge 0 \},
\end{align}
and the intersections of all barrier functions is 
\begin{align}
\SC &\triangleq \{ \tilde x \in \SP \colon \, \psi_{1}(x) \ge 0, \psi_{2}(x) \ge 0, \psi_{3}(x) \ge 0, \psi_{4}(x) \ge 0,  \psi_{1,1}(\tilde x) \ge 0, \psi_{2,1}(\tilde x) \ge 0 \}. \label{eq:C_set}
\end{align}
It follows from \cite[Prop. 1]{tan_high-order_2022} that if $\tilde{x}_0 \in \SC$ and for all $t \in [0, \infty)$, $\tilde{x}(t) \in \SP$, then for all $t \in [0, \infty)$, $\tilde{x}(t) \in \SC$, which implies that $x(t) \in \Psi$ and $\mu(t) \in \Phi$. 
Thus, satisfying \ref{Obj:safety} and \ref{Obj:actuator_input}.
However, $\SP$ is constructed from the zero-superlevel set of \cref{eq:psi_i2,eq:psi_i1}, which are not implementable because they depend on the unknown function $w$. 
Thus, we use estimate $\hat w$ and bound $\delta$ to construct implementable lower bounds. 
Specifically, for $i \in \{1,2\}$, consider $\ubar{\psi}_{i,2} \colon \BBR^{\tilde n} \to \BBR$ given by
\begin{align}
\ubar{\psi}_{i,2}(\tilde{x}) &\triangleq L_{\tilde{f}} \psi_{i,1}(\tilde{x}) + \psi_{i,1}^\prime(\tilde{x}) \tilde D \hat w(x,\nu) + \gamma_{i}(\psi_{i,1}(\tilde{x})) - \sqrt{\varepsilon + \delta \| \tilde D^\top \psi_{i,1}^\prime(\tilde{x})^\top \|^2}, \label{eq:bar psi_i2}
\end{align}
and for $i \in \{3,4\}$, consider $\ubar{\psi}_{i,1} \colon \BBR^{\tilde n} \to \BBR$ given by
\begin{align}
\ubar{\psi}_{i,1}(\tilde{x}) &\triangleq L_{\tilde{f}} \psi_{i}(\tilde{x}) + \psi_{i}^\prime(\tilde{x}) \tilde D \hat w(x,\nu) + \beta_{i}(\psi_{i}(\tilde{x})) - \sqrt{\varepsilon + \delta \| \tilde D^\top \psi_{i}^\prime(\tilde{x})^\top \|^2},\label{eq:bar psi_i1}
\end{align}
where $\varepsilon > 0$. 
The next result shows that the computable functions \cref{eq:bar psi_i2,eq:bar psi_i1} are lower bounds for \cref{eq:psi_i2,eq:psi_i1}.

\begin{proposition}\rm 
\label{prop:computable_lower_bounds}
Let $\tilde{x} \in \BBR^{\tilde n}$ be such that $\xi(x,\nu,\delta) \geq 0$. 
Then, for $i \in \{1,2\}$, $\ubar{\psi}_{i,2}(\tilde{x}) \leq \psi_{i,2}(\tilde{x})$, and for $i \in \{3,4\}$, $\ubar{\psi}_{i,1}(\tilde{x}) \leq \psi_{i,1}(\tilde{x})$.
\end{proposition}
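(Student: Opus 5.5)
The plan is to compare each computable function with its true counterpart term by term. After that, a single Cauchy--Schwarz estimate together with the bound $\delta\ge\|\tilde w\|^2$ finishes the argument.

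Fix $i\in\{1,2\}$ and subtract \cref{eq:bar psi_i2} from \cref{eq:psi_i2}. The terms $L_{\tilde f}\psi_{i,1}(\tilde x)$ and $\gamma_i(\psi_{i,1}(\tilde x))$ cancel, and $w=\hat w+\tilde w$. This leaves
\begin{equation*}
\psi_{i,2}(\tilde x)-\ubar{\psi}_{i,2}(\tilde x) = \psi_{i,1}'(\tilde x)\tilde D\tilde w(x,\nu) + \sqrt{\varepsilon+\delta\|\tilde D^\top\psi_{i,1}'(\tilde x)^\top\|^2}.
\end{equation*}
The case $i\in\{3,4\}$ is handled identically from \cref{eq:psi_i1,eq:bar psi_i1}, with $\psi_i$ in place of $\psi_{i,1}$ and $\beta_i$ in place of $\gamma_i$. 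It therefore suffices to show that for any row vector $p\in\BBR^{1\times 3}$ (here $p=\psi_{i,1}'\tilde D$ or $p=\psi_i'\tilde D$) we have $p\tilde w+\sqrt{\varepsilon+\delta\|p\|^2}\ge 0$.

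For this I would use Cauchy--Schwarz, $p\tilde w\ge -\|p\|\,\|\tilde w\|$. The hypothesis $\xi(x,\nu,\delta)\ge 0$ means $\delta\ge\|\tilde w\|^2\ge 0$ by \cref{eq:xi}, so $\|p\|\,\|\tilde w\|=\sqrt{\|\tilde w\|^2\|p\|^2}\le\sqrt{\delta\|p\|^2}$. Since $\varepsilon>0$, this is strictly less than $\sqrt{\varepsilon+\delta\|p\|^2}$. Hence the difference is nonnegative, which gives $\ubar{\psi}_{i,2}\le\psi_{i,2}$ and $\ubar{\psi}_{i,1}\le\psi_{i,1}$.

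The only point needing care is that the cancellation is exact. I need $L_{\tilde f}$ to be the same operator in both definitions, and the unknown $w$ to enter only through the term $\psi'\tilde D w$. This holds because $\tilde f$ in \cref{eq:Augmented_Dynamics_2} contains only the model $G_{\rm m}$ and no $w$. I also need $\delta\ge 0$ so that the square root is real, and that follows from $\xi\ge0$. No genuine obstacle is expected.
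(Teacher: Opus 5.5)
Your proposal is correct and follows essentially the same route as the paper. Both arguments subtract the definitions to isolate $\psi'\tilde D\tilde w+\sqrt{\varepsilon+\delta\|\tilde D^\top\psi'^\top\|^2}$, bound the first term by Cauchy--Schwarz, and use $\delta\ge\|\tilde w\|^2$ from $\xi\ge 0$. The only cosmetic difference is that the paper drops $\varepsilon$ first and factors the remainder as $\|\tilde D^\top\psi_{i,1}'^\top\|(\sqrt{\delta}-\|\tilde w\|)\ge 0$, whereas you compare directly and obtain strict positivity.
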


\begin{proof}
Let $i \in \{1,2\}$, and it follows from \cref{eq:psi_i2,eq:bar psi_i2} that
\begin{align*}
    \psi_{i,2}(\tilde{x}) -\ubar{\psi}_{i,2}(\tilde{x}) & = \psi_{i,1}' \tilde{D} \big(w(x) - \hat w(x,\nu)
    \big) + \sqrt{\varepsilon + \delta \| \tilde D^\top \psi_{i,1}^\prime(\tilde{x})^\top \|^2} \\
    & \geq \psi_{i,1}' \tilde{D} \tilde{w}(x,\nu) + \sqrt{\delta} \| \tilde D^\top \psi_{i,1}^\prime(\tilde{x})^\top \| \\
    & \geq - \| \tilde D^\top \psi_{i,1}^\prime(\tilde{x})^\top \| \|\tilde{w}(x,\nu)\| + \sqrt{\delta} \| \tilde D^\top \psi_{i,1}^\prime(\tilde{x})^\top \| \\
    & = \| \tilde D^\top \psi_{i,1}^\prime(\tilde{x})^\top \| \big(\sqrt{\delta} -\|\tilde{w}(x,\nu)\| \big),
\end{align*}
which is nonnegative because $\xi(x, \nu. \delta) \geq 0$.
Using the same process for $i \in \{3,4\}$ we show that $\psi_{i,1}(\tilde{x}) \geq \ubar{\psi}_{i,1}(\tilde{x})$, which completes the proof.
\hfill $\blacksquare$
\end{proof}

Since \Cref{prop:computable_lower_bounds} implies $\xi(x,\nu,\delta) \geq 0$ is sufficient for \cref{eq:bar psi_i2,eq:bar psi_i1} to lower bound \cref{eq:psi_i2,eq:psi_i1}, define 
\begin{equation}
    \Xi \triangleq \{ \tilde x \in \BBR^{\tilde n} \colon \, \xi(x,\nu,\delta) \geq 0 \},
\end{equation}
and note that \Cref{prop:Dist_est_conv} implies that $\Xi$ is forward invariant. 
Next, define
\begin{align}
\ubar{\SP} &\triangleq \{ \tilde x \in \BBR^{\tilde n} \colon \, \ubar{\psi}_{1,2}(\tilde{x}) \ge 0, \ubar{\psi}_{2,2}(\tilde{x}) \ge 0, \ubar{\psi}_{3,1}(\tilde{x}) \ge 0,  \ubar{\psi}_{4,1}(\tilde{x}) \ge 0, \phi_1(\mu) \ge 0, \ldots, \phi_{\ell}(\mu) \ge 0 \},\label{eq:p_bar_set}
\end{align}
and 
\begin{equation}
    \ubar{\SC} \triangleq \SC \cap \Xi. \label{eq:C_bar_set}
\end{equation}
The next result shows that forward invariance of $\ubar{\SP}$ can ensure forward invariance of $\ubar{\SC}$.

\begin{proposition}\rm 
\label{prop.forward_invariance}
Assume Assumption~\ref{ass:Bounded_Dist} is satisfied.
Let $\tilde x_0 \in \ubar{\SC}$ and $t_{\rm m} \in (0,\infty]$. Assume for all $t \in [0, t_{\rm m})$, $\tilde{x}(t) \in \ubar{\SP}$.
Then, for all $t \in [0, t_{\rm m})$, $\tilde{x}(t) \in \ubar{\SC}$.
\end{proposition}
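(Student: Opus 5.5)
The plan is to chain three earlier facts: forward invariance of $\Xi$ (\Cref{prop:Dist_est_conv}), the pointwise lower bounds of \Cref{prop:computable_lower_bounds}, and the higher-order barrier result \cite[Prop. 1]{tan_high-order_2022} quoted above.

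First, since $\tilde x_0 \in \ubar{\SC} = \SC \cap \Xi$, we have $\xi(x_0,\nu_0,\delta_0) \geq 0$. By assumption $\tilde x(t) \in \ubar{\SP}$ on $[0,t_{\rm m})$, so $\mu$ is a solution of \cref{eq:control_dynamics} and is continuous (absolutely continuous in $t$). Hence \Cref{prop:Dist_est_conv} applies, and $\tilde x(t) \in \Xi$ for all $t \in [0,t_{\rm m})$. A minor point is that \Cref{prop:Dist_est_conv} is stated on $[0,\infty)$ while here we only have the trajectory on $[0,t_{\rm m})$. Its comparison argument is local in time, so it holds on any interval of existence; alternatively, forward completeness can be invoked.

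Second, fix $t \in [0,t_{\rm m})$. Since $\tilde x(t) \in \Xi$, \Cref{prop:computable_lower_bounds} gives
\[
\psi_{i,2}(\tilde x(t)) \geq \ubar{\psi}_{i,2}(\tilde x(t)) \geq 0 \quad (i \in \{1,2\}), \qquad
\psi_{i,1}(\tilde x(t)) \geq \ubar{\psi}_{i,1}(\tilde x(t)) \geq 0 \quad (i \in \{3,4\}),
\]
where the second inequalities hold because $\tilde x(t) \in \ubar{\SP}$. The constraints $\phi_j(\mu(t)) \geq 0$ are shared by $\ubar{\SP}$ and $\SP$. Therefore $\tilde x(t) \in \SP$ for all $t \in [0,t_{\rm m})$.

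Third, since $\tilde x_0 \in \SC$ and $\tilde x(t) \in \SP$ on $[0,t_{\rm m})$, the higher-order CBF result \cite[Prop. 1]{tan_high-order_2022} gives $\tilde x(t) \in \SC$ on $[0,t_{\rm m})$. To be safe about the finite horizon, I would verify this step directly with the comparison lemma, working downward through the cascade.

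For $i \in \{3,4\}$, $\psi_{i,1} \geq 0$ gives $\dot\psi_i = L_{\tilde f}\psi_i + \psi_i' \tilde D w \geq -\beta_i(\psi_i)$. This holds because $\psi_i$ depends only on $x$, so $\psi_i'\tilde B = 0$. With $\psi_i(0) \geq 0$ and $\beta_i$ extended class-$\mathcal K$, the comparison lemma gives $\psi_i(t) \geq 0$.

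For $i \in \{1,2\}$, I would check that $\psi_{i,1}$ has no dependence on $\mu$ or on the unknown $w$. This uses $L_{\tilde f}\psi_i = \psi_i'(f+g\mu)$ and the fact that $\psi_1,\psi_2$ depend only on $r$ and $R$: then $\psi_i' g\mu = 0$, since $g$ has zero rows in the $r$ and $R$ blocks, and $\psi_i'\tilde D = 0$. Consequently $\dot\psi_i = \psi_{i,1} - \beta_i(\psi_i)$, with no $w$ or $\zeta$ term. Along trajectories $\psi_{i,1}'\tilde B = 0$ as well, so $\dot\psi_{i,1} = L_{\tilde f}\psi_{i,1} + \psi_{i,1}'\tilde D w = \psi_{i,2} - \gamma_i(\psi_{i,1}) \geq -\gamma_i(\psi_{i,1})$. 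The comparison lemma then gives $\psi_{i,1} \geq 0$, and hence $\psi_i \geq 0$.

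I expect the main obstacle to be these structural zero-dependence checks, which confirm that the definitions \cref{eq:psi_i2,eq:psi_i1} are exactly the true derivatives including $w$. Combining the third step with $\Xi$ invariance yields $\tilde x(t) \in \SC \cap \Xi = \ubar{\SC}$ for all $t \in [0,t_{\rm m})$.
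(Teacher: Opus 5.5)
Your proposal is correct and follows the same three-step chain as the paper's proof. First, invariance of $\Xi$ from \Cref{prop:Dist_est_conv}; then \Cref{prop:computable_lower_bounds} to pass from $\ubar{\SP}$ to $\SP$; then \cite[Prop. 1]{tan_high-order_2022} to conclude $\tilde x(t)\in\SC$. Your extras are sound details the paper leaves implicit: the direct comparison-lemma check of the last step, including the verifications $\psi_i' g=0$, $\psi_i'\tilde D=0$ and $\psi_{i,1}'\tilde B=0$ for $i\in\{1,2\}$ showing that \cref{eq:psi_i2,eq:psi_i1} are the true time derivatives including $w$, and your remark about the finite horizon $[0,t_{\rm m})$.
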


\begin{proof}
Let $t_1 \in [0, t_{\rm m})$. 
Since $\tilde x_0 \in \ubar{\SC} = \SC \cap \Xi$, \Cref{prop:Dist_est_conv} implies that $\tilde x(t_1) \in \Xi$.
Since, in addition, $\tilde x(t_1) \in \ubar{\SP}$, it follows from \Cref{prop:computable_lower_bounds} that $\tilde x(t_1) \in \SP$.
Thus, for all $t \in [0,t_\rmm)$, $\tilde x(t) \in \Xi \cap \SP$.
Since $\tilde{x}_0 \in \SC$ and for all $t \in [0,t_\rmm)$, $\tilde{x}(t) \in \SP$, it follows from \cite[Prop. 1]{tan_high-order_2022} that for all $t \in [0,t_\rmm)$,  $\tilde{x}(t) \in \SC$, which confirms that $\tilde x(t) \in \SC \cap \Xi$.
\hfill $\blacksquare$
\end{proof}

\Cref{prop.forward_invariance} shows that forward invariance of $\ubar{\SP}$ is sufficient for \ref{Obj:safety} and \ref{Obj:actuator_input}. 
Motivated by this fact, the next section constructs a candidate CBF whose zero-superlevel set is a subset of $\ubar{\SP}$.

\subsection{Soft-Minimum Relaxed CBF}

Let $\rho,k_1, \ldots, k_{\ell+4} > 0$, and consider $h: \BBR^{\tilde n} \to \BBR$ defined by
\begin{equation} \label{eq:h}
    h(\tilde{x}) \triangleq \mathrm{softmin}_\rho \Big( k_1\ubar{\psi}_{1,2}(\tilde{x}), k_2\ubar{\psi}_{2,2}(\tilde{x}), k_3\ubar{\psi}_{3,1}(\tilde{x}), 
    k_4 \ubar{\psi}_{4,1}(\tilde{x}), k_5\phi_1(\mu), \ldots, k_{\ell+4}\phi_{\ell}(\mu)  \Big ),
\end{equation}
where 
\begin{equation}
    \mathrm{softmin}_\rho ( z_1, \ldots, z_N) \triangleq - \frac{1}{\rho} \mathrm{log} \sum_{i=1}^N e^{-\rho z_i} \label{eq:softmin}
\end{equation}
is the log-sum-exponential \textit{soft minimum}. 
The zero-superlevel set of $h$ is
\begin{equation}\label{eq:h_set}
    \mathcal{H} \triangleq \{ \tilde{x} \in \BBR^{\tilde n}: h(\tilde{x}) \geq 0 \}.
\end{equation}
It follows from \cite[Prop. 2]{safari_time-varying_2025} that $\mathcal{H} \subset \ubar{\mathcal{P}}$.
In fact, as $\rho \to \infty, \mathcal{H} \to \ubar{\mathcal{P}}$.
The soft minimum \eqref{eq:softmin} is a continuously differentiable lower bound on the minimum.
The worst case conservativeness of the soft minimum in comparison to the minimum is $(\log N)/\rho$ \cite[]{rabiee_soft_barriers_2025}. 
Hence, $\rho$ can be selected sufficiently large to limit conservativeness based on $N$ and the order of magnitude of the arguments of the soft minimum. 
However, selecting $\rho$ is a trade off between conservativeness of the soft-minimum approximation and the magnitude of $h^\prime$ at points where the minimum is not differentiable \cite[]{rabiee_soft_barriers_2025}. 
The positive constants $k_1, \ldots, k_{\ell+4} > 0$ are selected to normalize the magnitude of the arguments of \cref{eq:h}.

Let $\alpha \colon \BBR \to \BBR$ be locally Lipschitz and nondecreasing such that $\alpha(0)=0$. 
Then, consider the constraint function $b \colon \BBR^{\tilde n} \times \BBR^6 \times \BBR \to \BBR$ given by
\begin{equation}\label{eq:b}
    b(\tilde{x}, \bar{\zeta}, \bar{\kappa}) \triangleq h^\prime(\tilde x) \left [ \tilde{f}(\tilde{x})+\tilde{B} \bar \zeta + \tilde{D}w(x) \right ] + \alpha\big(h(\tilde{x})\big) 
    + \bar{\kappa} h(\tilde{x}),
\end{equation}
where $\bar \zeta$ is the surrogate control variable and $\bar \kappa$ is the slack variable. 
Note that $b(\tilde{x}, \bar{\zeta}, \bar{\kappa}) \geq 0$ is a CBF constraint that can be used to select a surrogate control that makes $\SH$ forward invariant.
However, $b$ depends on the unknown function $w$, which implies that $b(\tilde x,\bar{\zeta},\bar{\kappa}) \geq 0$ cannot be implemented. 
Thus, we use estimate $\hat w$ and bound $\delta$ to construct the implementable lower bound $\ubar{b} \colon \BBR^{\tilde n} \times \BBR^6 \times \BBR \to \BBR$ defined as
\begin{equation}\label{eq:b_ubar}
    \ubar{b}(\tilde{x}, \bar{\zeta}, \bar{\kappa}) \triangleq h^\prime(\tilde x) \left [ \tilde{f}(\tilde{x})+\tilde{B} \bar \zeta + \tilde{D} \hat w(x,\nu) \right ]
    - \sqrt{\delta}\|\tilde{D}^\top h^\prime(\tilde{x})^\top\| 
    + \alpha\big(h(\tilde{x})\big) 
    + \bar{\kappa} h(\tilde{x}).
\end{equation}
The next result shows that $\ubar{b}$ is a lower bound on $b$. 
The proof is similar to that of \cref{prop:computable_lower_bounds}.

\begin{proposition}\rm\label{prop:ubar_b_lower_b}
    Let $(\tilde x, \bar{\zeta} ,\bar{\kappa}) \in \Xi \times \BBR^6 \times \BBR$.
    Then, $\ubar{b}(\tilde{x}, \bar{\zeta}, \bar{\kappa}) \leq b(\tilde{x}, \bar{\zeta}, \bar{\kappa})$.
\end{proposition}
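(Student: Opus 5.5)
The plan is to compute the difference $b - \ubar{b}$ directly from \cref{eq:b,eq:b_ubar}, bound it below by Cauchy--Schwarz, and use membership in $\Xi$, exactly as in the proof of \Cref{prop:computable_lower_bounds}. Fix $(\tilde x, \bar\zeta, \bar\kappa) \in \Xi \times \BBR^6 \times \BBR$. In \cref{eq:b,eq:b_ubar}, the terms $h'(\tilde x)[\tilde f(\tilde x) + \tilde B \bar\zeta]$, $\alpha(h(\tilde x))$, and $\bar\kappa h(\tilde x)$ appear identically, so they cancel. This leaves
\begin{equation*}
b(\tilde x,\bar\zeta,\bar\kappa) - \ubar{b}(\tilde x,\bar\zeta,\bar\kappa) = h'(\tilde x)\tilde D\big(w(x) - \hat w(x,\nu)\big) + \sqrt{\delta}\,\|\tilde D^\top h'(\tilde x)^\top\| = h'(\tilde x)\tilde D\,\tilde w(x,\nu) + \sqrt{\delta}\,\|\tilde D^\top h'(\tilde x)^\top\|.
\end{equation*}

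Next, apply Cauchy--Schwarz to the first term, viewing $h'(\tilde x)\tilde D \tilde w$ as the inner product of $\tilde D^\top h'(\tilde x)^\top$ with $\tilde w$. This gives $h'(\tilde x)\tilde D\tilde w \ge -\|\tilde D^\top h'(\tilde x)^\top\|\,\|\tilde w(x,\nu)\|$, and therefore
\begin{equation*}
b - \ubar{b} \ge \|\tilde D^\top h'(\tilde x)^\top\|\big(\sqrt{\delta} - \|\tilde w(x,\nu)\|\big).
\end{equation*}

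Finally, $\tilde x \in \Xi$ means $\xi(x,\nu,\delta) = \delta - \|\tilde w\|^2 \ge 0$. Hence $\delta \ge \|\tilde w\|^2 \ge 0$, so $\sqrt{\delta}$ is well defined and real, and $\sqrt{\delta} \ge \|\tilde w\|$ because the square root is monotone. Both factors in the last bound are therefore nonnegative, which gives $\ubar b \le b$.

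There is no real obstacle here. The one point needing care is that $\sqrt{\delta}$ in \cref{eq:b_ubar} is only meaningful when $\delta \ge 0$, and $\tilde x \in \Xi$ guarantees this. Unlike \Cref{prop:computable_lower_bounds}, no $\varepsilon$ term appears, so the bound follows without the intermediate step of dropping $\varepsilon$ under the square root.
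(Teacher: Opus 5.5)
Your proposal is correct and matches the paper's approach: the paper proves this only by saying it is similar to the proof of \Cref{prop:computable_lower_bounds}, and your argument carries out exactly that template. You cancel the common terms, apply Cauchy--Schwarz to $h'(\tilde x)\tilde D\tilde w$, and use $\delta \ge \|\tilde w\|^2$ on $\Xi$; you also correctly note that no $\varepsilon$ step is needed and that $\sqrt{\delta}$ is well defined on $\Xi$.
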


Since $b(\tilde{x}, \bar{\zeta}, \bar{\kappa}) \geq 0$ can be used to select a surrogate control that makes $\SH \subset \ubar \SP$ forward invariant, it follows from \Cref{prop.forward_invariance,prop:ubar_b_lower_b} that $\ubar b(\tilde x,\bar{\zeta},\bar{\kappa}) \geq 0$ can be used to select a surrogate control that makes $\ubar \SC$ forward invariant, thus satisfying \ref{Obj:safety} and \ref{Obj:actuator_input}. 
The next result provides a sufficient condition such that $\ubar b(\tilde x,\bar{\zeta},\bar{\kappa}) \geq 0$ is feasible for all $\tilde x \in \BBR^{\tilde n}$.

\begin{proposition}\rm \label{prop:ubar_b_feasibility}
Assume that for all $\tilde x \in \mbox{bd } \SH$, $\frac{\partial h}{\partial \mu}(\tilde x) \neq 0$. 
Then, for all $\tilde x \in \BBR^{\tilde n}$, there exist $(\bar \zeta ,\bar \kappa ) \in \BBR^6 \times \BBR$ such that $\ubar{b}(\tilde x, \bar \zeta, \bar \kappa) \geq 0$.
\end{proposition}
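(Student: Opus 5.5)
The plan is a case split on the sign of $h(\tilde x)$ and on whether $\tilde x$ lies on $\mbox{bd}\,\SH$. The starting point is that $\ubar b$ in \cref{eq:b_ubar} is affine in $(\bar\zeta,\bar\kappa)$:
\begin{equation*}
\ubar b(\tilde x,\bar\zeta,\bar\kappa) = c_0(\tilde x) + h'(\tilde x)\tilde B\,\bar\zeta + h(\tilde x)\,\bar\kappa ,
\end{equation*}
where
\begin{equation*}
c_0(\tilde x) \triangleq h'(\tilde x)\big[\tilde f(\tilde x) + \tilde D\hat w(x,\nu)\big] - \sqrt{\delta}\,\|\tilde D^\top h'(\tilde x)^\top\| + \alpha(h(\tilde x))
\end{equation*}
does not depend on the decision variables. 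Since $\tilde B = [0 \;\; I_6 \;\; 0 \;\; 0]^\top$ selects the $\mu$-components, we have $h'(\tilde x)\tilde B = \frac{\partial h}{\partial \mu}(\tilde x)$. An affine function of $(\bar\zeta,\bar\kappa)$ attains nonnegative values unless both coefficients vanish, i.e., unless $\frac{\partial h}{\partial\mu}(\tilde x)=0$ and $h(\tilde x)=0$ hold together. So the whole task is to rule out that coincidence, or to handle it separately.

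\textbf{Case 1: $h(\tilde x)\neq 0$.} Take $\bar\zeta=0$ and
\begin{equation*}
\bar\kappa = \frac{|c_0(\tilde x)|}{h(\tilde x)}.
\end{equation*}
Then $\bar\kappa h(\tilde x) = |c_0(\tilde x)|$, so $\ubar b = c_0 + |c_0| \geq 0$. This works whichever sign $h(\tilde x)$ has, and it needs no assumption.

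\textbf{Case 2: $h(\tilde x)=0$.} Then $\tilde x\in\SH$. I claim $\tilde x\in\mbox{bd}\,\SH$. Since $h$ is continuous, $\SH$ is closed. If $\tilde x$ were an interior point of $\SH$, then $h\geq0$ on a neighborhood of $\tilde x$, and $h$ would have a local minimum at $\tilde x$; then $h'(\tilde x)=0$, and in particular $\frac{\partial h}{\partial\mu}(\tilde x)=0$. The cleanest route is to argue directly along the $\mu$-direction instead of invoking interiority abstractly: if $\frac{\partial h}{\partial\mu}(\tilde x)\neq0$, then perturbing $\mu$ by $-s\,\frac{\partial h}{\partial\mu}(\tilde x)^\top$ for small $s>0$ makes $h<0$, so $\tilde x\in\mbox{bd}\,\SH$. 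The hypothesis rules out the alternative $\frac{\partial h}{\partial\mu}(\tilde x)=0$ at such points. Carried out carefully, the reasoning is: let $h(\tilde x)=0$; either $\tilde x\in\mbox{bd}\,\SH$, where the hypothesis gives $\frac{\partial h}{\partial\mu}(\tilde x)\neq0$, or $\tilde x$ is interior, where the local-minimum argument gives $h'(\tilde x)=0$.

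In the first subcase, set $\bar\kappa=0$ and
\begin{equation*}
\bar\zeta = \frac{|c_0(\tilde x)|}{\big\|\frac{\partial h}{\partial\mu}(\tilde x)\big\|^2}\,\frac{\partial h}{\partial\mu}(\tilde x)^\top ,
\end{equation*}
which gives $\ubar b = c_0 + |c_0| \geq 0$. In the interior subcase, $h'(\tilde x)=0$ makes every $h'$ term vanish, including $\sqrt{\delta}\,\|\tilde D^\top h'^\top\|$. With $\alpha(0)=0$ this leaves $c_0(\tilde x)=0$, so any $(\bar\zeta,\bar\kappa)$ gives $\ubar b=0\geq0$.

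The main obstacle is Case 2: verifying that a zero of $h$ that is not a boundary point forces $h'=0$. This is the standard fact that a $C^1$ function attaining its local minimum value $0$ at an interior point of its superlevel set has zero gradient there. Also, $\sqrt{\delta}$ must be well defined, which I will take as given along trajectories where $\delta\geq0$, or read as part of the definition of $\ubar b$.
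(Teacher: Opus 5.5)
Your proof is correct, and it is more careful than the paper's. The paper splits on $\tilde x \in \mbox{bd}\,\SH$ versus $\tilde x \notin \mbox{bd}\,\SH$. On the boundary it uses the same $\bar\zeta$ along $\frac{\partial h}{\partial \mu}(\tilde x)^\top$ that you use, tuned to give $\ubar b = 0$ exactly. Off the boundary it takes an arbitrary $\bar\zeta$ and solves for $\bar\kappa$ by dividing by $h(\tilde x)$. That second step tacitly assumes $h(\tilde x)\neq 0$ whenever $\tilde x\notin\mbox{bd}\,\SH$. This fails at interior points of $\SH$ where $h$ touches zero, i.e., where $h$ has a local minimum of value $0$. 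Your decomposition by whether $h(\tilde x)=0$ closes exactly this hole: you split the zero set of $h$ into boundary and interior points, and at interior zeros the Fermat argument gives $h'(\tilde x)=0$, hence $c_0(\tilde x)=\alpha(0)=0$. The paper's version buys only a shorter write-up and the exact equality $\ubar b = 0$.

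Two small clean-ups. First, drop the opening claim in Case 2 that every zero of $h$ lies in $\mbox{bd}\,\SH$, together with the perturbation remark. That claim is false in general and you do not use it; the boundary/interior dichotomy you state afterwards is the actual argument. Second, note that the Fermat step needs only differentiability of $h$ at $\tilde x$, which is already presupposed for $\ubar b$ to be defined. Your caveat that $\sqrt{\delta}$ requires $\delta\geq 0$ applies equally to the paper's proof, which also leaves it implicit.
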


\begin{proof}
We consider 2 cases: $\tilde x \in \mbox{bd } \mathcal{H}$, and $\tilde{x}  \in \BBR^{\tilde n} \setminus \mbox{bd } \mathcal{H}$.
First, let $\tilde x \in \mbox{bd } \mathcal{H}$. 
Let $\bar \kappa \in \BBR$ and
    \begin{equation*}
        \bar \zeta = \frac{- h^\prime(\tilde x) \left [ \tilde{f}(\tilde{x}) + \tilde{D} \hat w(x,\nu) \right ] + \sqrt{\delta}\Vert{}\tilde{D}^\top h^\prime(\tilde{x})^\top\Vert{} }{\left \|\frac{\partial h}{\partial \mu}(\tilde x) \right \|^2}
        \left [ \frac{\partial h}{\partial \mu}(\tilde x) \right ]^\top
    \end{equation*}
Then, \cref{eq:b_ubar} yields $\ubar{b}(\tilde x, \bar \zeta, \bar \kappa) = 0$.    
    
Next, let $\tilde{x}  \in \BBR^{\tilde n} \setminus \mbox{bd } \mathcal{H}$. 
Let $\bar \zeta \in \BBR^6$ and
    \begin{equation*}
        \bar \kappa = \frac{-  h^\prime(\tilde x) \left [ \tilde{f}(\tilde{x})+\tilde{B} \bar \zeta + \tilde{D} \hat w(x,\nu) \right ]
    + \sqrt{\delta}\|\tilde{D}^\top h^\prime(\tilde{x})^\top\| 
    - \alpha\big(h(\tilde{x})\big)}{h(\tilde{x})}.
    \end{equation*}
Then, \cref{eq:b_ubar} yields $\ubar{b}(\tilde x, \bar \zeta, \bar \kappa) = 0$.
    \hfill $\blacksquare$
\end{proof}

Based on \Cref{prop:ubar_b_feasibility}, we make the following assumption. 

\begin{assumption}\label{ass:Lgh}
\rm
For all $\tilde x \in \mbox{bd } \SH$, $\frac{\partial h}{\partial \mu}(\tilde x) \neq 0$.
\end{assumption}

\subsection{Closed-Form Optimal Control}

This section presents a real-time feedback control that satisfies \ref{Obj:landing}--\ref{Obj:actuator_input}. 
First, consider the desired surrogate control $\zeta_{\rm d} \colon [0,\infty) \times \BBR^{\tilde n} \to \BBR^{6}$ defined by
\begin{equation} \label{eq:zeta_d}
    \zeta_{\rm d}(t, \tilde x) \triangleq \dot{\mu}_{\rm d}(t, x, \nu) - \sigma \big(\mu - \mu_{\rm d}(t, x, \nu) \big),
\end{equation}
where $\sigma > 0$. 
If $\zeta = \zeta_{\rm d}$, then the solution $\mu$ to the integrator \eqref{eq:control_dynamics} converges exponentially to $\mu_{\rm d}$ \cite[Prop. 8]{rabiee_closed-form_2024}. 
Since the integral of the desired surrogate control $\zeta_{\rm d}$ converges exponentially to the desired control force and torque $\mu_\rmd$, it follows from \Cref{lemma:position_control_stability,lemma:attitude_control_stability} that for a sufficiently large $\tau$, \ref{Obj:landing} is satisfied for $\zeta = \zeta_{\rm d}$.
However, $\mu_{\rm d}$ does not necessarily satisfy \ref{Obj:safety}, and $\mu_{\rm d}$ cannot necessarily be generated by a control $u$ that satisfies \ref{Obj:actuator_input}.
Thus, we consider the minimum-intervention cost function 
\begin{equation} \label{eq:Cost_function}
    J(t, \tilde{x}, \bar{\zeta}, \bar{\kappa}) \triangleq \| \bar{\zeta} - \zeta_{\rm d}(t, \tilde{x}) \|^2 + \gamma \bar{\kappa}^2,
\end{equation}
where $\gamma>0$. 
The objective is to find $(\bar{\zeta}, \bar{\kappa})$ That minimize $J$ subject to the constraint $\ubar{b}(\tilde{x}, \bar{\zeta}, \bar{\kappa}) \ge 0$.

For each $\tilde{x} \in \BBR^{\tilde n}$, the minimizer of \cref{eq:Cost_function} subject to $\ubar{b}(\tilde{x}, \bar{\zeta}, \bar{\kappa}) \geq 0$ can be derived from the first-order necessary conditions for optimality (e.g., \cite{rabiee_closed-form_2024,safari_time-varying_2025,Wieland_CBF_2007}).
Specifically, the optimal surrogate control $\zeta_* \colon [0,\infty) \times \BBR^{\tilde n} \to \BBR^6$ is given by 
\begin{equation}\label{eq:safe_control}
    \zeta_*(t, \tilde{x}) \triangleq \zeta_{\rm d}(t, \tilde{x}) +\lambda(t, \tilde{x})\frac{\partial h}{\partial \mu}(\tilde x)^\top,
\end{equation}
where $\lambda \colon [0,\infty) \times \BBR^{\tilde n} \to \BBR$ is defined by
\begin{gather}
\lambda(t, \tilde{x}) \triangleq \max \left\{0, \frac{- \ubar{b}(\tilde{x}, \zeta_{\rm d}(t,\tilde{x}),0)}{\|\frac{\partial h}{\partial \mu}(\tilde x)^\top\|^2 +\gamma^{-1}h(\tilde{x})^2} \right \}.\label{eq:lambda_safety}
\end{gather}
In addition, the optimal slack variable $\kappa_* \colon [0,\infty) \times \BBR^{\tilde n} \to \BBR$ is given by 
\begin{equation}\label{eq:safe_kappa}
    \kappa_*(t, \tilde{x}) \triangleq \gamma^{-1}h(\tilde{x})\lambda(t, \tilde{x}).
\end{equation}
The parameter $\gamma$ weights $\kappa$ in the cost \eqref{eq:Cost_function} and influences control aggressiveness. 
The next result shows that the control $\zeta_*$ and $\kappa_*$ satisfy the safety constraint $\ubar{b}(x, \zeta_*,\kappa_*) \geq 0$, and provides useful properties regarding the continuity of $\zeta^*$, $\kappa^*$ and $\lambda$.

\begin{proposition}\rm\label{prop:safe_control_properties}
    If Assumption~\ref{ass:Lgh} is satisfied, then, the following statements hold:
    \begin{enumerate}
    \renewcommand{\theenumi}{\alph{enumi}}
    \renewcommand{\labelenumi}{\indent (\theenumi)}
    
        \item For all $(t,x) \in [0,\infty) \times \BBR^{\tilde n}$, $\ubar{b}(x, \zeta_*(t,\tilde{x}),\kappa_*(t, \tilde{x})) \geq 0$. \label{prop:safe_b}
        
        \item $\zeta_*$, $\kappa_*$ and $\lambda$ are continuous on $[0,\infty) \times \BBR^{\tilde{n}}$. \label{prop:safe_c}
        
        \item Assume that $h'$ is locally Lipschitz on $\BBR^{\tilde{n}}$. Then, $\zeta_*$, $\kappa_*$ and $\lambda$ are locally Lipschitz on $[0,\infty) \times \BBR^{\tilde{n}}$. \label{prop:safe_d}
    \end{enumerate}
\end{proposition}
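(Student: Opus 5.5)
The plan is to exploit the fact that $\ubar b$ is affine in $(\bar\zeta,\bar\kappa)$. By \cref{eq:b_ubar}, with $\tilde B$ selecting the $\mu$-block, $h'(\tilde x)\tilde B = \frac{\partial h}{\partial \mu}(\tilde x)$, so
\begin{equation*}
\ubar b(\tilde x,\bar\zeta,\bar\kappa) = \ubar b(\tilde x,\zeta_{\rm d}(t,\tilde x),0) + \frac{\partial h}{\partial \mu}(\tilde x)\big(\bar\zeta-\zeta_{\rm d}(t,\tilde x)\big) + \bar\kappa\, h(\tilde x).
\end{equation*}
Write $q(t,\tilde x) \triangleq \|\frac{\partial h}{\partial \mu}(\tilde x)^\top\|^2+\gamma^{-1}h(\tilde x)^2$. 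Substituting \cref{eq:safe_control,eq:safe_kappa} gives
\begin{equation*}
\ubar b(\tilde x,\zeta_*,\kappa_*) = \ubar b(\tilde x,\zeta_{\rm d},0) + \lambda\, q .
\end{equation*}

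The first step is to show that $q>0$ everywhere, using Assumption~\ref{ass:Lgh}. If $h(\tilde x)\neq 0$, then $q \ge \gamma^{-1}h^2>0$. If $h(\tilde x)=0$, then $\tilde x\in\mathrm{bd}\,\SH$; this holds because $h$ is continuous and every point with $h=0$ should be a limit of points with $h<0$, which I would justify from the assumption that $\frac{\partial h}{\partial\mu}\neq 0$ there (so $h$ decreases along some $\mu$ direction). On the boundary, Assumption~\ref{ass:Lgh} gives $\frac{\partial h}{\partial\mu}\neq0$, so again $q>0$. This settles well-definedness of $\lambda$.

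For (a), I split into two cases. If $\ubar b(\tilde x,\zeta_{\rm d},0)\ge0$, then $\lambda=0$ and the value above is nonnegative. Otherwise $\lambda q = -\ubar b(\tilde x,\zeta_{\rm d},0)$, and the value is exactly $0$.

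For (b), continuity of $\zeta_{\rm d}$, $h$, $h'$, $\tilde f$, $\hat w$, and $\sqrt\delta\|\tilde D^\top h'^\top\|$ (as compositions of continuous maps, given that $\mu_{\rm d}$ and $\dot\mu_{\rm d}$ are continuous from the smoothness of the reference) makes $\tilde x\mapsto \ubar b(\tilde x,\zeta_{\rm d}(t,\tilde x),0)$ continuous. Since $q$ is continuous and positive, the ratio is continuous, and $\max\{0,\cdot\}$ preserves continuity. Then $\zeta_*$ and $\kappa_*$ are continuous as sums and products of continuous functions.

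For (c), the same structure applies with local Lipschitzness. Products, sums, and compositions of locally Lipschitz maps are locally Lipschitz, and $1/q$ is locally Lipschitz because $q$ is locally Lipschitz and, on each compact set, bounded below by a positive constant (a continuous positive function attains a positive minimum). The max with $0$ is 1-Lipschitz. The term $\sqrt\delta$ needs care: on $\Xi$ we have $\delta\ge\|\tilde w\|^2\ge0$, but $\sqrt{\cdot}$ is not Lipschitz at $0$. The term $\|\tilde D^\top h'^\top\|$ is also not smooth at $0$, though the norm is still Lipschitz.

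\textbf{Main obstacle.} Handling $\sqrt\delta$ is where I expect the most difficulty. I would first try to use that $\delta(t)>0$ whenever $\delta_0>0$, by \cref{eq:est_error_var_dyn}, restricting the domain accordingly; failing that, I would note that the statement implicitly assumes the required regularity of $\zeta_{\rm d}$ and $\sqrt\delta$. Justifying that $h=0$ forces a boundary point is the other delicate step.
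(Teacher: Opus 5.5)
Your argument for (a)--(c) is the paper's. Because $\ubar{b}$ is affine in $(\bar\zeta,\bar\kappa)$, you get $\ubar{b}(\tilde x,\zeta_*,\kappa_*)=\ubar{b}(\tilde x,\zeta_{\rm d},0)+\lambda q=\max\{0,\ubar{b}(\tilde x,\zeta_{\rm d},0)\}$. Parts (b) and (c) are the same composition arguments. You also supply details the paper leaves implicit: positivity of $q$, a local positive lower bound on $q$, and $1$-Lipschitzness of $\max\{0,\cdot\}$.

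\textbf{The step that fails.} Your justification that $h(\tilde x)=0$ forces $\tilde x\in\mathrm{bd}\,\mathcal{H}$ does not work. You want nearby points with $h<0$ from $\frac{\partial h}{\partial\mu}(\tilde x)\neq 0$, but Assumption~\ref{ass:Lgh} gives that inequality only once $\tilde x$ is known to lie in $\mathrm{bd}\,\mathcal{H}$, so the argument is circular. For the topological boundary the implication is in fact false. If $h$ has a local minimum of value $0$ at $\tilde x$, then $\tilde x$ is interior to $\mathcal{H}$ and $h'(\tilde x)=0$. Hence $q=0$ and $\ubar{b}(\tilde x,\zeta_{\rm d},0)=\alpha(0)=0$, so $\lambda$ is the undefined quotient $0/0$ and no continuity claim can hold there.

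What is missing is an interpretation, not a lemma: $\mathrm{bd}\,\mathcal{H}$ in Assumption~\ref{ass:Lgh} must mean the zero level set $\{\tilde x: h(\tilde x)=0\}$. The paper uses this tacitly. Its proof of \cref{prop:ubar_b_feasibility} divides by $h(\tilde x)$ for every $\tilde x\notin\mathrm{bd}\,\mathcal{H}$, and its proof of (b) simply cites Assumption~\ref{ass:Lgh} for the denominator. With that reading your two cases give $q>0$ immediately, and the rest of your proof goes through.

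\textbf{The $\sqrt{\delta}$ term.} Your concern here is legitimate, and the paper's proof of (c) does not address it; it only invokes local Lipschitzness of $\tilde f$, $\zeta_{\rm d}$, $h$ and $h'$. $\sqrt{\delta}$ is not real for $\delta<0$ and not Lipschitz at $\delta=0$. So (b) only makes sense on $\{\delta\ge 0\}$, and (c) only on the open set $\{\delta>0\}$. Restricting to $\{\delta>0\}$ is the right repair. It suffices for \cref{thm:safety} when $\delta_0>0$, as in the simulations, because then $\delta(t)>0$ along the solution. Present it as a restriction of the claimed domain, not as something the stated hypotheses deliver.
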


\begin{proof}
    To prove (\ref{prop:safe_b}), substituting \cref{eq:safe_control,eq:safe_kappa} into \cref{eq:b_ubar} yields
    \begin{align*}
        \ubar{b}(\tilde{x}, \zeta_{*}(t,\tilde{x}),\kappa_*(t,\tilde{x})) &= h^\prime(\tilde x) \left [ \tilde{f}(\tilde{x})+\tilde{B} \Big (\zeta_{\rm d}(t, \tilde{x}) +\lambda(t, \tilde{x})\frac{\partial h}{\partial \mu}(\tilde x)^\top \Big ) + \tilde{D} \hat w(x,\nu) \right ] \\
        &\qquad
    - \sqrt{\delta}\|\tilde{D}^\top h^\prime(\tilde{x})^\top\| 
    + \alpha\big(h(\tilde{x})\big) 
    + \gamma^{-1}\lambda(t, \tilde{x}) h(\tilde{x})^2\\
    &= \ubar{b}(\tilde{x}, \zeta_{\rm d}(t,\tilde{x}),0) + \lambda(t,x) \Big( \|\frac{\partial h}{\partial \mu}(\tilde x)^\top\|^2 +\gamma^{-1}h(\tilde{x})^2 \Big) \\
    & = \max\{0 , \ubar{b}(\tilde{x}, \zeta_{\rm d}(t,\tilde{x}),0)\} \\
    & \geq 0.
    \end{align*}

    To prove (\ref{prop:safe_c}), since \cref{cons:virtual_input_constraints,eq:bar psi_i1,eq:bar psi_i2} are continuously differentiable on $\BBR^{\tilde n}$, it follows from \cref{eq:softmin,eq:h} that $h$ is continuously differentiable on $\BBR^{\tilde n}$.
    Thus, $h'$ is continuous on $\BBR^{\tilde n}$.
    Furthermore, since $\mu_{\rm r}$ is continuously differentiable on $[0,\infty)$, it follows from \cref{eq:ud,eq:control_dynamics,eq:zeta_d} that $\zeta_{\rm d}$ is continuous on $[0,\infty) \times \BBR^{\tilde n}$.
    Since, in addition, the right hand side of \cref{eq:EHGO_Dynamics,eq:est_error_var_dyn} are locally Lipschitz on $\BBR^{\tilde n}$, it follows that $\hat{w}$ and $\delta$ are continuous on $[0,\infty)$.
    Hence, it follows from \cref{eq:b_ubar} that $\ubar{b}(\tilde{x}, \zeta_{\rm d}(t,\tilde{x}),0)$ is continuous on $[0,\infty) \times \BBR^{\tilde n}$.
    Moreover, it follows from \cref{eq:lambda_safety} and Assumption~\ref{ass:Lgh} that $\lambda$ is continuous on $[0,\infty) \times \BBR^{\tilde n}$, which combined with \cref{eq:safe_control,eq:safe_kappa} implies that $\zeta_*$ and $\kappa_*$ are continuous on $[0,\infty) \times \BBR^{\tilde n}$.

    To prove (\ref{prop:safe_d}), note that $\tilde{f}$ is locally Lipschitz on $\BBR^{\tilde n}$ and $\zeta_{\rm d}$ is locally Lipschitz in $\tilde x$ on $\BBR^{\tilde n}$.
    Since, in addition, $h$ and $h'$ are locally Lipschitz on $\BBR^{\tilde n}$, it follows from \cref{eq:b_ubar,eq:lambda_safety} that $\lambda$ is locally Lipschitz on $\BBR^{\tilde n}$, which combined with \cref{eq:safe_control,eq:safe_kappa} implies that $\zeta_*$ and $\kappa_*$ are locally Lipschitz on $\BBR^{\tilde n}$.
    \hfill $\blacksquare$
\end{proof}

The following theorem is the main result on constraint satisfaction.

\begin{theorem}\rm{ \label{thm:safety}
Consider \cref{eq:Augmented_Dynamics,eq:Augmented_Dynamics_1,eq:Augmented_Dynamics_2}, where Assumptions~\ref{ass:Bounded_Dist} and~\ref{ass:Lgh} are satisfied. 
Let $\zeta = \zeta_*$, where $\zeta_*$ is given by \cref{eq:safe_control,eq:lambda_safety}. 
Assume that $h^\prime$ is locally Lipschitz on $\BBR^{\tilde n}$. 
Then, for all $\tilde{x}_0 \in \ubar{\mathcal{C}}$, the following statements hold:
    \begin{enumerate}
    \renewcommand{\theenumi}{\alph{enumi}}
    \renewcommand{\labelenumi}{\indent (\theenumi)}

    \item 
    There exists a maximum $t_{\rm m}\in(0,\infty]$ such that \cref{eq:Augmented_Dynamics,eq:Augmented_Dynamics_1,eq:Augmented_Dynamics_2} with $\zeta=\zeta_*$ has a unique solution on $[0,t_{\rm m})$. \label{thm:safety_a}

    \item 
    For all $t \in [0,t_{\rm m})$, $\tilde{x}(t) \in \ubar{\mathcal{C}}$. \label{thm:safety_b}

    \item 
    For all $t \in [0,t_{\rm m})$, $x(t) \in \Psi$ and $\mu(t) \in \Phi$. \label{thm:safety_c}
    
    \item 
    If $t_{\rm m} = \infty$, then $v$, $R$, $\omega$, $\nu$ and $\delta$ are bounded. \label{thm:safety_d}

    \end{enumerate}
}
\end{theorem}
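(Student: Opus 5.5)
Part (a) comes first. With $\zeta=\zeta_*$, the closed-loop vector field is $\tilde f(\tilde x)+\tilde B\zeta_*(t,\tilde x)+\tilde D w(x)$. Here $\tilde f$ is locally Lipschitz, since $G_{\rm m}$ is $C^2$ and the remaining terms are polynomial in the states. The term $w=\Delta(r)$ is locally Lipschitz, because $\Delta'$ is bounded by Assumption~\ref{ass:Bounded_Dist}. Finally, \Cref{prop:safe_control_properties}(\ref{prop:safe_d}) gives that $\zeta_*$ is locally Lipschitz in $\tilde x$ and continuous in $t$. Standard existence and uniqueness for ODEs then yields a unique solution on a maximal interval $[0,t_{\rm m})$ with $t_{\rm m}\in(0,\infty]$.

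For (b), I would first show that $\mathcal H$ is forward invariant on $[0,t_{\rm m})$. Start from $\tilde x_0\in\ubar{\SC}$. This gives $\tilde x_0\in\Xi$, and \Cref{prop:Dist_est_conv} then gives $\tilde x(t)\in\Xi$ for all $t$. Note that $\mu$ is continuous along the solution because it is an integrator state, so the hypotheses of that proposition hold. Next, \Cref{prop:ubar_b_lower_b} together with \Cref{prop:safe_control_properties}(\ref{prop:safe_b}) gives, along the trajectory,
\[
\dot h(\tilde x(t)) = b(\tilde x,\zeta_*,0) \ge -\alpha(h)-\kappa_*(t,\tilde x)\,h .
\]
Since $\kappa_*$ is continuous, the right-hand side is a locally Lipschitz (in $h$) comparison system $\dot y=-\alpha(y)-\kappa_*(t)y$, which has $y\equiv0$ as an equilibrium. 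The comparison lemma, or equivalently Nagumo's theorem applied to the time-varying barrier, then shows that $h(\tilde x(0))\ge0$ implies $h(\tilde x(t))\ge0$.

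This step needs $\tilde x_0\in\mathcal H$, and this is the main obstacle. We only know $\tilde x_0\in\ubar{\SC}\subseteq \SC$, and $\SC$ is built from $\SP$, not from $\ubar{\SP}$ or $\mathcal H$. I would handle this by reading $\ubar{\SC}$ as intended to contain the initial condition in $\mathcal H$. If the statement does not allow that, I would argue that the definitions are meant so that $\tilde x_0\in\mathcal H$ is implied, and state that dependence explicitly. Once $h(\tilde x(t))\ge0$ holds, we have $\tilde x(t)\in\mathcal H\subset\ubar{\SP}$ by \cite[Prop. 2]{safari_time-varying_2025}. \Cref{prop.forward_invariance} then gives $\tilde x(t)\in\ubar{\SC}$ on $[0,t_{\rm m})$.

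Part (c) follows directly. We have $\ubar{\SC}\subseteq\SC$, and the constraints $\psi_i\ge0$ appearing in the definition of $\SC$ give $x(t)\in\Psi$. The constraints $\phi_j(\mu)\ge0$ inherited from $\SP$ give $\mu(t)\in\Phi$.

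For (d), assume $t_{\rm m}=\infty$. The state $R$ lies in $\SOT$, so it is bounded. From $\psi_3\ge0$ and $\varphi\le1$ we get $\|v\|\le v_{\rm m}$, and from $\psi_4\ge0$ we get $\|\omega\|\le\omega_{\rm m}$. For $\delta$, its dynamics are a stable linear filter driven by $\frac{c_\delta}{\tau}\|v\|^2\le \frac{c_\delta}{\tau}v_{\rm m}^2$, which gives
\[
|\delta(t)|\le |\delta_0|+\frac{c_\delta v_{\rm m}^2}{\tau^2}.
\]
For $\nu$, I would use $\nu=\hat w-\tau v$ and $\|w-\hat w\|\le\sqrt\delta$, which holds since $\tilde x\in\Xi$. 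Combined with the boundedness of $w=\Delta$ from Assumption~\ref{ass:Bounded_Dist}, the bound on $\delta$, and the bound on $v$, this makes $\hat w$ bounded and hence $\nu$ bounded.
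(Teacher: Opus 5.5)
Your argument follows the paper's proof essentially step for step:
\begin{itemize}
\item for (a), local Lipschitz continuity plus standard existence and uniqueness;
\item for (b), invariance of $\Xi$, \Cref{prop:ubar_b_lower_b} and part (a) of \Cref{prop:safe_control_properties} to get $\dot h\ge-\alpha(h)-\kappa_* h$, then a comparison argument, then $\mathcal{H}\subset\ubar{\mathcal{P}}$ and \Cref{prop.forward_invariance};
\item for (c) and (d), the definitions.
\end{itemize}
Your comparison system $\dot y=-\alpha(y)-\kappa_*(t)y$ is the paper's $\dot\eta=-\alpha(\eta)-\beta(t)\eta^2$ in another form, since $\kappa_*h=\gamma^{-1}\lambda h^2=\beta h^2$. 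You are more careful than the paper in two places. In (a) you account for the term $\tilde D w(x)$. In (d) the paper attributes boundedness of $\nu$ and $\delta$ to \Cref{prop:Dist_est_conv} alone, which only gives $\delta\ge\|\tilde w\|^2$; you supply the needed filter bound on $\delta$ and the estimate $\|\nu\|\le\tau\|v\|+\sup\|\Delta\|+\sqrt{\delta}$. One slip: $\dot h\neq b(\tilde x,\zeta_*,0)$, because $b$ contains $\alpha(h)$. The correct identity is $\dot h=b(\tilde x,\zeta_*,\kappa_*)-\alpha(h)-\kappa_*h$, and your final inequality does follow from it.

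The gap you flagged at the start of (b) is genuine, and your fallback of arguing that the definitions imply $\tilde x_0\in\mathcal{H}$ cannot work. The paper's proof has the same gap: it simply writes $\eta(0)=h(\tilde x_0)\ge0$. Membership $\tilde x_0\in\ubar{\mathcal{C}}=\mathcal{C}\cap\Xi$ only gives $\psi_{i,2}(\tilde x_0)\ge0$ and $\psi_{i,1}(\tilde x_0)\ge0$ for the \emph{true} higher-order functions. \Cref{prop:computable_lower_bounds} gives $\ubar{\psi}\le\psi$, which is the wrong direction, so not even $\tilde x_0\in\ubar{\mathcal{P}}$ follows. Moreover, $\ubar{\mathcal{P}}$ only controls the minimum of the arguments of the soft minimum defining $h$. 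But $\mathrm{softmin}_\rho$ is strictly smaller than the minimum whenever it has two or more arguments. For example, any $\tilde x_0\in\ubar{\mathcal{C}}$ with $\phi_j(\mu_0)=0$ for some $j$ has $h(\tilde x_0)<0$. So the statement needs the extra hypothesis $h(\tilde x_0)\ge0$, i.e., $\tilde x_0\in\ubar{\mathcal{C}}\cap\mathcal{H}$. State that assumption explicitly; with it, your proof (and the paper's) is complete.
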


\begin{proof}
To prove (\ref{thm:safety_a}), it follows from (\ref{prop:safe_c}) and (\ref{prop:safe_d}) of \cref{prop:safe_control_properties} that $\zeta_*$ is continuous in $t$ and locally Lipschitz in $\tilde x$. 
Since, in addition, $\tilde{f}$ is locally Lipschitz on $\BBR^{\tilde n}$, it follows from \cite[Theorem 3.1]{khalil_nonlinear_2002} that \cref{eq:Augmented_Dynamics,eq:Augmented_Dynamics_1,eq:Augmented_Dynamics_2} with $\zeta=\zeta_*$ has a unique solution on $[0,t_\rmm)$.

To prove (\ref{thm:safety_b}), let $\tilde x_0 \in \ubar \SC$, and let $\eta \colon [0,t_\rmm) \to \BBR$ satisfy $\dot \eta(t) = -\alpha(\eta(t)) - \beta(t) \eta(t)^2$, where $\eta(0) = h(\tilde x_0)$ and $\beta(t) \triangleq \gamma^{-1}\lambda(t,\tilde x(t))$. 
Since $\alpha$ is nondecreasing, $\alpha(0)=0$, and $\eta(0) =h(\tilde x_0) \ge 0$, it follows that for all $t \in [0,t_\rmm)$, $\eta(t) \ge 0$. 
Moreover, it follows from \cref{eq:Augmented_Dynamics,eq:safe_control,eq:safe_kappa,eq:b,eq:b_ubar}, \cref{prop:ubar_b_lower_b}, \cref{prop:Dist_est_conv} and (\ref{prop:safe_b}) of \cref{prop:safe_control_properties} that for all $t \in [0,t_\rmm)$,
\begin{align*}
    \frac{\rm d}{\rm d t}h(\tilde x(t)) &= h^\prime(\tilde x(t)) \left [ \tilde{f}(\tilde{x}(t))+\tilde{B} \bar \zeta_*(t,\tilde x(t)) + \tilde{D}w(x(t)) \right ]\\
    &= b\big (\tilde x(t), \zeta_*(t,\tilde x(t)),\kappa_*(t,\tilde x(t))\big) 
    - \alpha( h ( \tilde{x}(t) ) ) 
    - \kappa_* \big (t,\tilde x(t) \big ) h(\tilde{x}(t))\\
    &\ge \ubar b \big (\tilde x(t), \zeta_*(t,\tilde x(t)),\kappa_* (t,\tilde x(t)) )- \alpha(h(\tilde{x}(t))) 
    - \kappa_*\big (t,\tilde x(t) \big) h(\tilde{x}(t))\\
    &\ge - \alpha(h(\tilde{x}(t)))
    - \kappa_*(t,\tilde x(t)) h(\tilde{x}(t))\\
    &=- \alpha(h(\tilde{x}(t))) 
    - \beta(t) h(\tilde{x}(t))^2.
\end{align*}
Hence, the comparison lemma \cite[Lemma~3.4]{khalil_nonlinear_2002} implies that for all $t \in [0,t_\rmm)$, $h(\tilde x(t)) \ge \eta(t) \ge 0$.
Thus, for all $t \in [0,t_\rmm)$, $\tilde x(t) \in \SH \subset \ubar \SP$, and \Cref{prop.forward_invariance} implies that for all $t \in [0,t_\rmm)$, $\tilde x(t) \in \ubar \SC$.

To prove (\ref{thm:safety_c}), since for all $t \in [0,t_\rmm)$, $\tilde x(t) \in \ubar \SC$, it follows from \cref{eq:p_bar_set,eq:C_set,eq:C_bar_set} that for all $t \in [0,t_\rmm)$, $x(t) \in \Psi$ and $\mu(t) \in \Phi$.

To prove (\ref{thm:safety_d}), since for all $t \in [0,t_\rmm)$, $x(t) \in \Psi$, it follows that for all $t \in [0,t_\rmm)$, $x(t) \in \Psi_3$ and $x(t) \in \Psi_4$. Then, it follows from \cref{eq:psi_3,eq:psi_4} that $v(t)$ and $\omega(t)$ are bounded.
Moreover, since $R(t) \in \SOT$, it follows that $R(t)$ is bounded.
Furthermore, it follows from \cref{prop:Dist_est_conv} that $\nu(t)$ and $\delta(t)$ are bounded.
\hfill $\blacksquare$
\end{proof}

\Cref{thm:safety} shows that for all $t \in [0,t_\rmm)$, $x(t) \in \Psi$ and $\mu(t) \in \Phi$, thus, achieving \ref{Obj:safety} while producing a control $\mu$ that approximates $\mu_{\rm d}$, which is designed to achieve \ref{Obj:landing}. 
Moreover, the optimal allocation \cref{eq:opt_allocation} can be used to compute an actuator input such that for all $t \in [0,t_\rmm)$, $u_*(\mu(t)) \in \SU$, thus, achieving \ref{Obj:actuator_input}.
The complete feedback control architecture is shown in \cref{fig:ControlArchitecture}.

\begin{figure}[h]
  \centering
  \includegraphics[width=\textwidth]{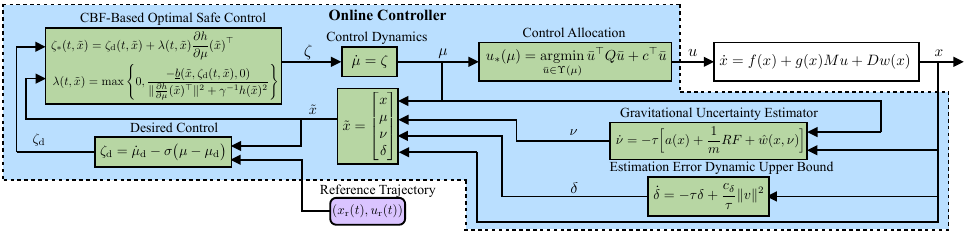}
  \caption{Optimal and safe control with state and input constraints under gravitational uncertainty. Control minimizes cost subject to CBF constraint using the composite CBF $h$.}
  \label{fig:ControlArchitecture}
\end{figure}

\section{Simulations} \label{sec:Simulations}

We consider 2 spacecraft configurations: a robotic lander \cite{ho_mascot_2021} and a crewed landing module \cite{nasa1969lunarmodule}. 
The landing tolerances are $r_{\rm a} = 2.0$~m, $v_{\rm a} = 0.5\,{\text{m}}/{\text{s}}$, $\theta_{\rm a} = 0.2\,\mathrm{rad}$, and $\omega_{\rm a} = 0.05\,{\text{rad}}/{\text{s}}$. 
For the constraints, we let $\theta_{\rm gs} = \pi/3$, $\theta_{\rm b} = \pi/3$, $v_{\rm m} = 12\,{\text{m}}/{\text{s}}$, and $\omega_{\rm m} = 0.1\,{\text{rad}}/{\text{s}}$, and $\varphi$ is given by \cref{example:transition_func} with $k=10^{-5}$.

The unknown gravitation force $G$ is a spherical harmonics model, whereas the known model $G_\rmm$ is a Newtonian point-mass model with 10\% less mass that the true value. 
The spherical harmonics model is implemented using \cite[]{arenas-uribe_higher-order_2026}.
Reference trajectories $(x_{\rm r}, u_{\rm r})$ are obtained with the trajectory generation algorithm \cite[]{malyuta_advances_2021}.

The EHGO \cref{eq:EHGO_Dynamics} and dynamic upper bound \cref{eq:est_error_var_dyn} are implemented with $\tau = 10$, $c_{\delta} = 10^{-3}$, $\nu_0=-\tau v_0$, and $\delta_0 = 10^{-10}$. 
We implement the control \cref{eq:opt_allocation,eq:Fd,eq:Td,eq:ud,eq:control_dynamics,eq:zeta_d,eq:safe_control,eq:lambda_safety} with $\sigma = 5$, $\mu_0 = \mu_{\rm r}(0)$, $\rho = 40$, $\alpha(h) = h$, $\gamma = 10^{20}$, $\beta_{1}(\psi_{1}) = 0.1 \psi_{1}$, $\gamma_{1}(\psi_{1,1}) = 0.1 \psi_{1,1}$, $\beta_{2}(\psi_{2}) = \psi_{2}$, $\gamma_{2}(\psi_{2,1}) = \psi_{2,1}$, $\beta_{3}(\psi_{3}) = \psi_{3}$ and $\beta_{4}(\psi_{4}) = 10 \psi_{4}$. 
The control is implemented with an update rate of 25~Hz. 
The project repository is at \footnote{\url{https://github.com/FelipeArenasUribe/Asteroid_Safe_Landing}}.

\begin{example}\rm \label{ex:robot_lander_ellip}
We consider a robotic lander similar to the Hayabusa2 mission \cite[]{ho_mascot_2021} with mass of $m = 600$~kg and inertia
    \begin{equation*}
        J =\begin{bsmallmatrix} 180 & -5 & -2 \\ -5 & 180 & -8 \\ -2 & -8 & 150 \end{bsmallmatrix} \text{ kg} \cdot \mathrm{m}^2 .
    \end{equation*}
The spacecraft has 4 reaction wheels in a pyramidal arrangement, and 3 bi-directional thrusters in an orthogonal configuration, which yields the allocation matrix
    \begin{equation*}
        M = \frac{1}{3} \begin{bsmallmatrix}
         0 & 0 & 0 & 0 & 3 & 0 & 0 \\
         0 & 0 & 0 & 0 & 0 & 3 & 0 \\
         0 & 0 & 0 & 0 & 0 & 0 & 3 \\
         3 & -1 & -1 & -1 & 0 & 0 & 0 \\
        -1 &  3 & -1 & -1 & 0 & 0 & 0 \\
        -1 & -1 &  3 & -1 & 0 & 0 & 0
        \end{bsmallmatrix}.
    \end{equation*}
Each reaction wheel has a torque limit of $\pm 0.2$~N-m, and each thruster has a force limit of $\pm 20$~N.
Thus, the admissible actuator input set $\SU$ is a hyper-cube with 14 half-spaces, and the resulting admissible control-force-and-torque set $\Phi$ has 18 half-spaces. Control gains are $K_r = 0.1 \, I_3$, $K_v= 0.632 \, I_3$, $K_R = \text{diag}(400, 400, 335)$ and $K_\omega = \text{diag}(536, 536, 448)$.

\begin{subfigure}[t]
    \centering
    \begin{minipage}[b]{0.48\textwidth} 
        \centering
        \includegraphics[width=0.85\linewidth]{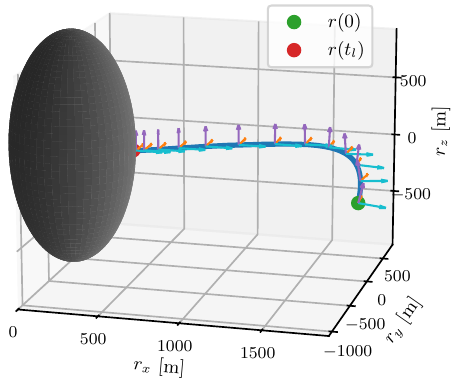}
        \caption{Spacecraft trajectory and body-frame attitude.}
        \label{fig:conf1_ellip_traj}
    \end{minipage}
    \hspace{0.01\textwidth}
    \begin{minipage}[b]{0.48\textwidth}
        \centering
        \includegraphics[width=0.9\linewidth]{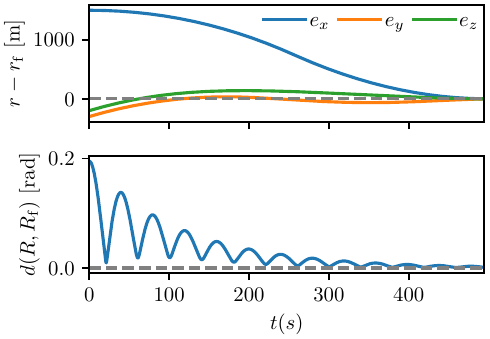}
        \caption{Distance $r-r_\rmf$ to landing, and geodesic distance $d(R,R_\rmf)$ to landing attitude. Note $e_x$, $e_y$, $e_z$ are the entries of $r-r_\rmf$.}
        \label{fig:conf1_ellip_error}
    \end{minipage}
    \hfill
    \setcounter{subfigure}{-1}
    \caption{Robotic lander soft-landing maneuver on elliptical asteroid.}
    \label{fig:ellip_conf1_error_and_traj}
\end{subfigure}
    
The asteroid is ellipsoidal with semi-axes lengths of $1000$~m, $400$~m, and $400$~m, and uniform mass distribution with density $1380$~kg/m$^3$, which is typical for a C-type asteroid. 
The asteroid rotates about its principal axis associated with the largest moment of inertia with synodic period $4$~h.

The initial conditions are 
$r_0 = \begin{bsmallmatrix} 1900 & -300 & -200 \end{bsmallmatrix}^{\top}$~m, 
$v_0 = \begin{bsmallmatrix} 0 & 4 & 4 \end{bsmallmatrix}^{\top}$~m/s, 
$\omega_0 = 0$~rad/s,
and $R_0$ is the rotation corresponding to a 1-2-3 Euler sequence with angles $(0.174,\,0.087,\,0)$~rad. 
The final position is $r_{\rm f} = \begin{bsmallmatrix} 400 & 0 & 0 \end{bsmallmatrix}^{\top}$~m, and the final attitude $R_{\rm f}$ is the rotation corresponding to a 1-2-3 Euler sequence with angles $(0,\,\pi/2,\,0)$~rad.

Landing is achieved at $t_{\rm l} = 494.4$~s, where $\|r-r_{\rm f}\| = 0.786$~m, $\|v - v_{\rm f}\| = 0.147$~m/s, $d(R, R_{\rm f}) = 0.001$~rad, $\|\omega-\omega_{\rm f}\| = 0.001$~rad/s.
Figure \ref{fig:conf1_ellip_traj} shows the spacecraft trajectory, and Figure \ref{fig:conf1_ellip_error} shows the distance $r-r_\rmf$ and geodesic $d(R,R_\rmf)$ throughout the maneuver. 
\Cref{fig:conf1_ellip_h_lambda} shows that the composite CBF $h$ is nonnegative throughout the maneuver, which implies that $\psi_1, \psi_2, \psi_3, \psi_4, \phi_1, \ldots, \phi_{18}$ are nonnegative.
Thus, $x(t) \in \Psi$ and $u(t) \in \SU$ for $t \in [0,t_\rml]$.
\Cref{fig:conf1_ellip_h_lambda} also shows that $\lambda$ is positive for $t \in [467,493]$, which indicates that the desired control forces and torques are modified during this time interval to satisfy state and actuator constraints. 
\Cref{fig:conf1_ellip_state_cons,fig:conf1_ellip_actuator_inputs} confirm that the state and actuator constraints are satisfied throughout the maneuver. 
\Cref{fig:conf1_ellip_dist_est} shows that the magnitude of the estimation error $\tilde w$ and satisfies the upper bound $\sqrt{\delta}$. 
The magnitude of $\tilde w$ increases as the spacecraft approaches the asteroid because the influence of higher-order gravitational effects increases.

\setcounter{figure}{7}
\begin{subfigure}[h]
    \centering
    \begin{minipage}[b]{0.45\textwidth}
        \centering
        \includegraphics[width=0.99\linewidth]{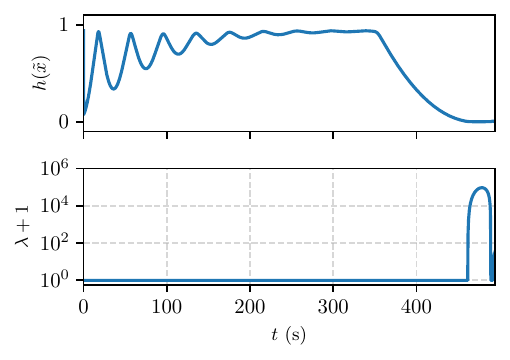}
        \caption{CBF $h$ is nonnegative throughout the maneuver, and $\lambda$ is positive at times when the desired control forces and torques are modified.}
        \label{fig:conf1_ellip_h_lambda}
        \vspace{0.15in} 
        \includegraphics[width=0.99\linewidth]{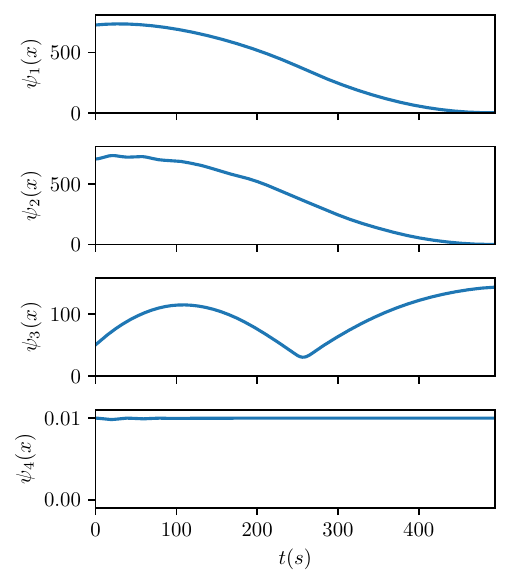}
        \caption{Barrier functions are nonnegative.}
        \label{fig:conf1_ellip_state_cons}
    \end{minipage}
    \hspace{0.1in}
    %
    \begin{minipage}[b]{0.45\textwidth}
        \centering
        \includegraphics[width=0.93\linewidth]{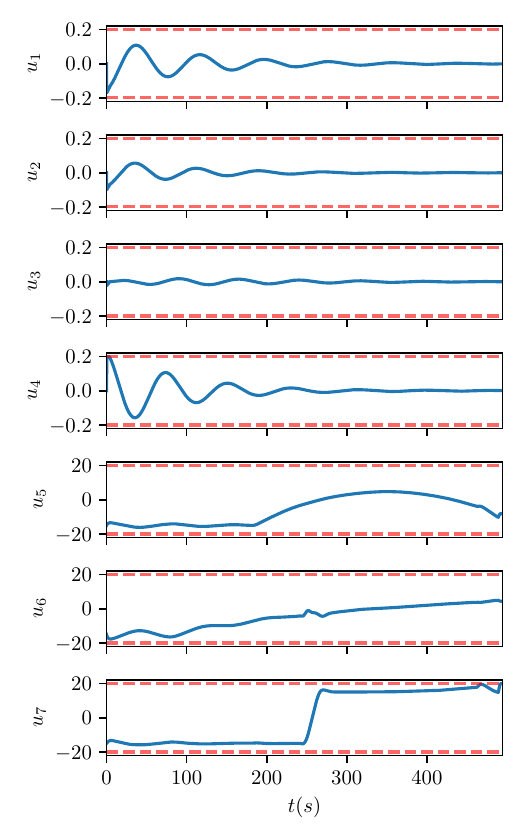}
        \caption{Actuator inputs are in the admissible set (dashed lines).}
        \label{fig:conf1_ellip_actuator_inputs}
        \vspace{0.15in} 
        \includegraphics[width=0.9\linewidth]{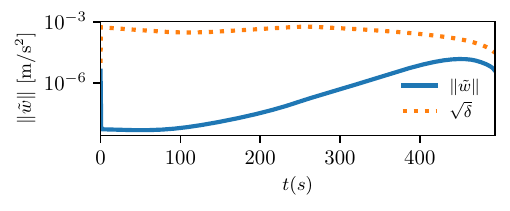}
        \caption{Estimate error $\| \tilde w \|$ is less than the computable upper bound $\sqrt{\delta}$.}
        \label{fig:conf1_ellip_dist_est}
    \end{minipage}

    \setcounter{subfigure}{-1}
    \caption{CBF behavior for the robotic lander and an ellipsoidal asteroid.}
    \label{fig:ellip_conf1_combined}
\end{subfigure}

\end{example}

\begin{example}\rm\label{ex:robotic_lander_irreg}
We again consider the robotic lander from \cref{ex:robot_lander_ellip}, but the asteroid is irregularly shaped with nonuniform density, which is modeled as the union of 2 uniform-density ellipsoids as shown in \Cref{fig:conf1_irreg_traj}. 
One ellipsoid has semi-axes lengths of $1000$~m, $400$~m, and $400$~m, and density $\rho_1 = 1460$~kg/m$^3$
The other has semi-axes lengths of $800$~m, $400$~m, and $400$~m, and density $\rho_2 = 1380$~kg/m$^3$. 
The asteroid rotates about its principal axis associated with the largest moment of inertia, with synodic period $4$~h.

The initial conditions are 
$r_0 = \begin{bsmallmatrix} 1554.259 & -550.0 & -710.605 \end{bsmallmatrix}^{\top}$~m, 
$v_0 = \begin{bsmallmatrix} -1 & 4 & -2 \end{bsmallmatrix}^{\top}$~m/s, 
$\omega_0 = 0$~rad/s, and 
$R_0$ is the rotation corresponding to a 1-2-3 Euler sequence with angles $(0.174,\,0.087,\,0)$~rad. 
The final position is $r_{\rm f} = \begin{bsmallmatrix} 762.523 & 0 & 430.952 \end{bsmallmatrix}^{\top}$~m, and the final attitude $R_{\rm f}$ is the rotation corresponding to a 1-2-3 Euler sequence with angles $(\pi,\,1.13,\,-\pi/2)$~rad.

Landing is achieved at $t_{\rm l} = 495.4 \,\text{ s}$ with $\|r-r_{\rm f}\| = 1.582$~m, $\|v - v_{\rm f}\| = 0.148 $~m/s, $d(R, R_{\rm f}) = 0.001$~rad, and $\|\omega-\omega_{\rm f}\| = 0.002$~rad/s. 
\Cref{fig:conf1_irreg_traj} shows the spacecraft trajectory, and \Cref{fig:conf1_irreg_error} shows $r-r_\rmf$ and $d(R,R_\rmf)$.
\Cref{fig:conf1_irreg_h_lambda} shows that $h$ is nonnegative throughout the maneuver, which implies that $\psi_1, \psi_2, \psi_3, \psi_4, \phi_1, \ldots, \phi_{18}$ are nonnegative.
Thus, $x(t) \in \Psi$ and $u(t) \in \SU$ for $t \in [0,t_\rml]$. 
\Cref{fig:conf1_irreg_h_lambda} also shows that $\lambda$ is positive for $t \in [0,98]$ and $t \in [456,495.4]$. 
In comparison to \cref{fig:conf1_ellip_h_lambda}, there are 2 periods of modification to the desired control forces and torques because of the gravitational effects not captured by the known model $G_{\rm m}$. 
\Cref{fig:conf1_irreg_state_cons,fig:conf1_irreg_actuator_inputs} confirm that the state and actuator constraints are satisfied, and \Cref{fig:conf1_irreg_dist_est} shows that the magnitude of the estimation error $\tilde w$ and satisfies the upper bound $\sqrt{\delta}$.

    \setcounter{figure}{8}
    \begin{subfigure}[h]
        \centering
        \begin{minipage}[b]{0.48\textwidth} 
            \centering
            \includegraphics[width=0.8\linewidth]{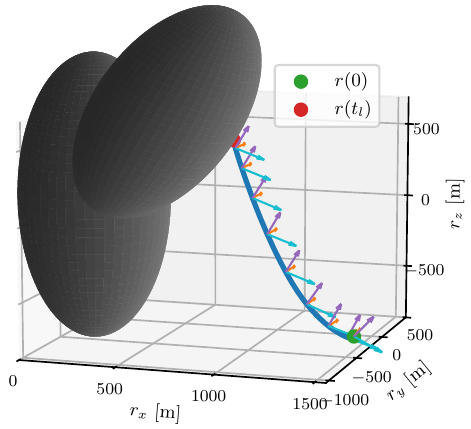}
            \caption{Spacecraft trajectory and body-frame attitude.}
            \label{fig:conf1_irreg_traj}
        \end{minipage}
        \hspace{0.01\textwidth}
        \begin{minipage}[b]{0.48\textwidth}
            \centering
            \includegraphics[width=0.9\linewidth]{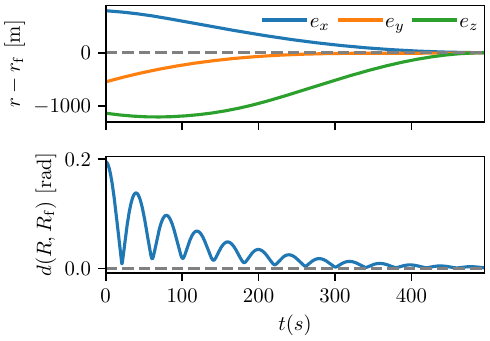}
            \caption{Distance $r-r_\rmf$ to landing, and geodesic distance $d(R,R_\rmf)$ to landing attitude. Note $e_x$, $e_y$, $e_z$ are the entries of $r-r_\rmf$.}
            \label{fig:conf1_irreg_error}
        \end{minipage}
        \hfill
        
        \setcounter{subfigure}{-1}
        \caption{Robotic lander soft-landing maneuver on irregularly-shaped asteroid.}
        \label{fig:irreg_conf1_error_and_traj}
    \end{subfigure}

    \setcounter{figure}{9}
    \begin{subfigure}[t]
        \centering
        \begin{minipage}[b]{0.48\textwidth}
            \centering
            \includegraphics[width=0.99\linewidth]{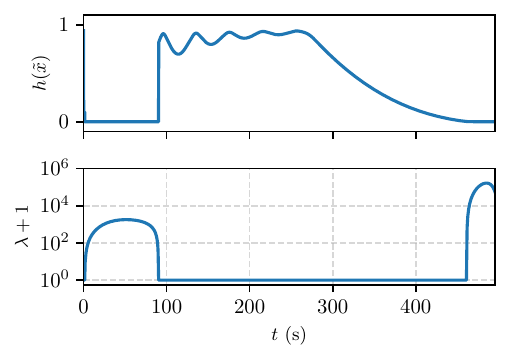}
            \caption{CBF $h$ is nonnegative throughout the maneuver, and $\lambda$ is positive at times when the desired control is modified.}
            \label{fig:conf1_irreg_h_lambda}
            
            \vspace{0.05in} 
            
            \includegraphics[width=0.99\linewidth]{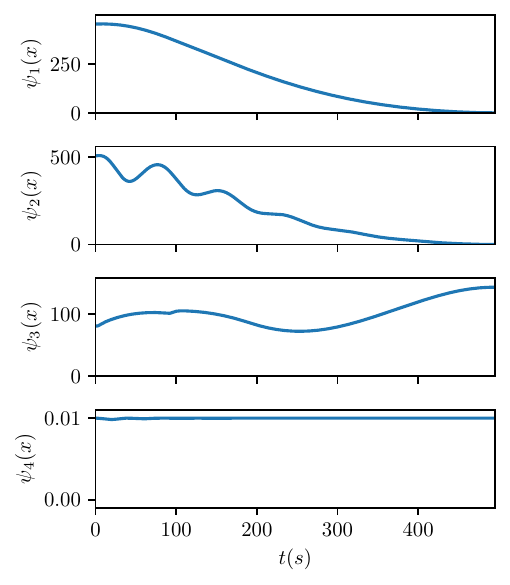}
            \caption{Barrier functions are nonnegative.}
            \label{fig:conf1_irreg_state_cons}
        \end{minipage}
        \hspace{0.01in}
        %
        \begin{minipage}[b]{0.48\textwidth}
            \centering
            \includegraphics[width=0.91\linewidth]{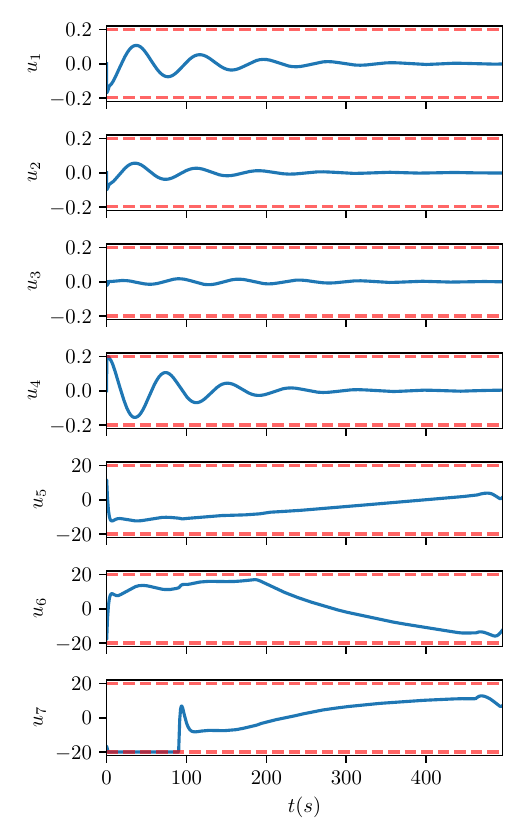}
            \caption{Actuator inputs are in the admissible set (dashed lines).}
            \label{fig:conf1_irreg_actuator_inputs}
            
            \vspace{0.05in} 
            
            \includegraphics[width=0.91\linewidth]{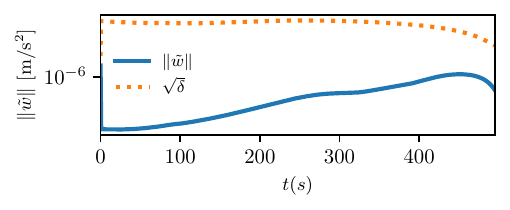}
            \caption{Estimate error $\| \tilde w \|$ is less than the computable upper bound $\sqrt{\delta}$.}
            \label{fig:conf1_irreg_dist_est}
        \end{minipage}

        \setcounter{subfigure}{-1}
        \caption{CBF behavior for the robotic lander and an irregularly-shaped asteroid.}
        \label{fig:irreg_conf1_combined}
    \end{subfigure}
    
\end{example}

\begin{example} \rm
We consider a crewed landing module similar to the Apollo Landing Module \cite[]{nasa1969lunarmodule} with mass of of $m = 3500$~kg and inertia matrix
    \begin{equation*}
        J =\begin{bsmallmatrix} 4200 & 80 & -40 \\ 80 & 4000 & -60 \\ -40 & -60 & 3200 \end{bsmallmatrix} \text{ kg} \cdot \mathrm{m}^2 .
    \end{equation*}

The spacecraft has 8 bi-directional thrusters, which are arranged according to \cite[Section 3-5]{nasa1969lunarmodule} and has the allocation matrix
    \begin{equation*}
        M = \begin{bsmallmatrix}
    1 & 0 & -1 & 0 & 0 & 0 & 0 & 0 \\
    0 & 1 & 0 & -1 & 0 & 0 & 0 & 0 \\
    0 & 0 & 0 & 0 & 1 & 1 & 1 & 1 \\
    0 & 0 & 0 & 0 & 1 & 0 & -1 & 0 \\
    0 & 0 & 0 & 0 & 0 & 1 & 0 & -1 \\
    1 & 1 & 1 & 1 & 0 & 0 & 0 & 0
    \end{bsmallmatrix}.
    \end{equation*}
Each thruster has a force limit of $\pm 100$~N.
Thus, the admissible actuator input set $\SU$ is a hyper-cube with 16 half-spaces, and the resulting admissible control-force-and-torque set $\Phi$ has 24 half-spaces. 
Control gains are $K_r = 0.1 \, I_3$, $K_v= 0.632 \, I_3$, $K_R = \text{diag}(189, 180, 144) \cdot 10^4$, and $K_\omega = \text{diag}(63, 59.4, 46.8) \cdot 10^4$. 
We use the asteroid, initial conditions, and final conditions from \cref{ex:robotic_lander_irreg}.

Landing is achieved at $t_{\rm l} = 469.4$~s, where $\|r-r_{\rm f}\| = 0.786$~m, $\|v - v_{\rm f}\| = 0.147$~m/s, $d(R, R_{\rm f}) = 0.001$~rad, $\|\omega-\omega_{\rm f}\| = 0.001$~rad/s.
\Cref{fig:conf2_irreg_traj} shows the spacecraft trajectory, and \Cref{fig:conf2_irreg_error} shows $r-r_\rmf$ and $d(R,R_\rmf)$.
\Cref{fig:conf2_irreg_h_lambda} shows that $h$ is nonnegative throughout the maneuver, which implies that $\psi_1, \psi_2, \psi_3, \psi_4, \phi_1, \ldots, \phi_{24}$ are nonnegative.
Thus, $x(t) \in \Psi$ and $u(t) \in \SU$ for $t \in [0,t_\rml]$. 
\Cref{fig:conf2_irreg_h_lambda} also shows that $\lambda$ is positive for $t \in [450,469.4]$. \Cref{fig:conf2_irreg_state_cons,fig:conf2_irreg_actuator_inputs} confirm that the state and actuator constraints are satisfied throughout the maneuver. 
The gravity estimator has similar performance to  \Cref{fig:conf1_irreg_dist_est}.

    \setcounter{figure}{10}
    \begin{subfigure}[t]
        \centering
        \begin{minipage}[b]{0.48\textwidth} 
            \centering
            \includegraphics[width=0.85\linewidth]{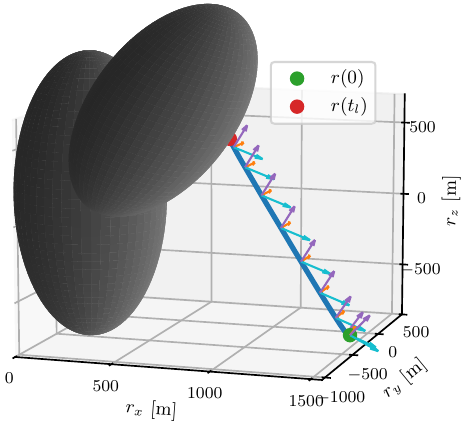}
            \caption{Spacecraft trajectory and body-frame attitude.}
            \label{fig:conf2_irreg_traj}
        \end{minipage}
        \hspace{0.01\textwidth}
        \begin{minipage}[b]{0.48\textwidth}
            \centering
            \includegraphics[width=0.9\linewidth]{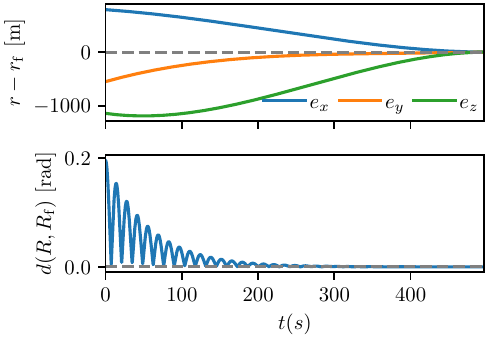}
            \caption{Distance $r-r_\rmf$ to landing, and geodesic distance $d(R,R_\rmf)$ to landing attitude. Note $e_x$, $e_y$, $e_z$ are the entries of $r-r_\rmf$.}
            \label{fig:conf2_irreg_error}
        \end{minipage}
        \hfill
        
        \setcounter{subfigure}{-1}
        \caption{Crewed landing module soft-landing maneuver on irregularly-shaped asteroid.}
        \label{fig:irreg_conf2_error_and_traj}
    \end{subfigure}

    \setcounter{figure}{11}
    \begin{subfigure}[ht]
        \centering
        \begin{minipage}[b]{0.48\textwidth}
            \centering
            \includegraphics[width=0.91\linewidth]{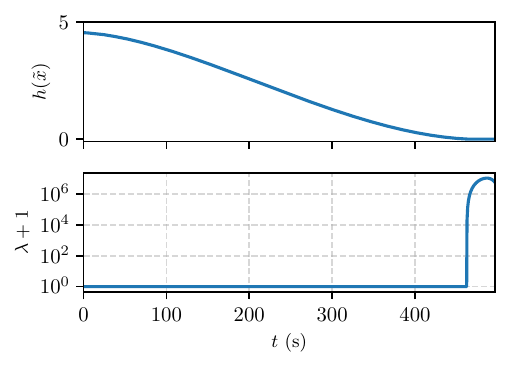}
            \caption{CBF $h$ is nonnegative throughout the maneuver, and $\lambda$ is positive at times when the desired control is modified.}
            \label{fig:conf2_irreg_h_lambda}
            
            \vspace{0.05in} 
            
            \includegraphics[width=0.91\linewidth]{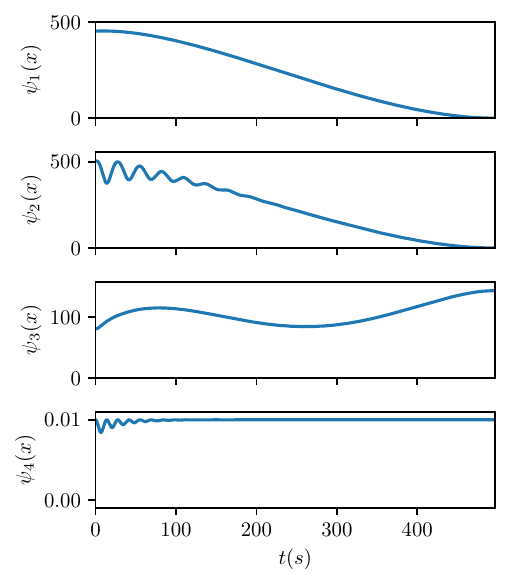}
            \caption{State barrier functions are nonnegative.} 
            \label{fig:conf2_irreg_state_cons}
        \end{minipage}
        \hspace{0.01in}
        %
        \begin{minipage}[b]{0.48\textwidth}
            \centering
            \includegraphics[width=0.975\linewidth]{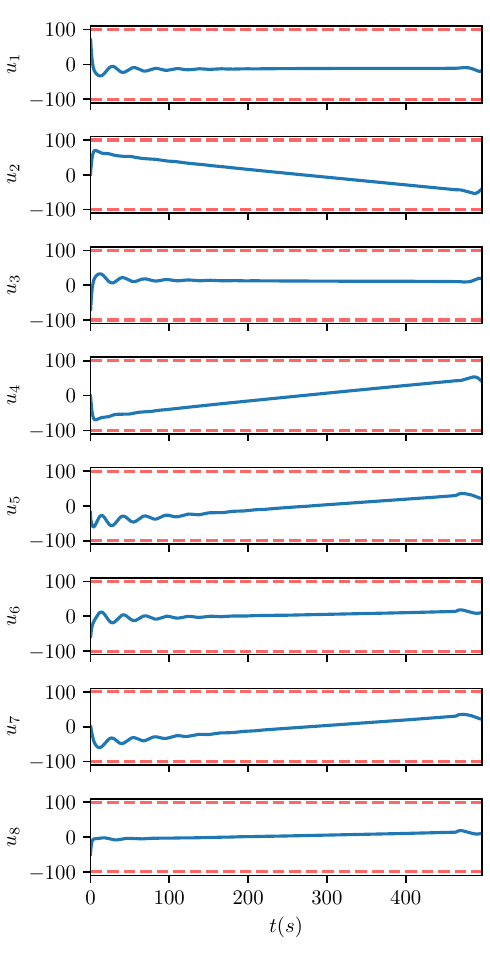}
            \caption{Actuator inputs are in the admissible set (dashed lines).}
            \label{fig:conf2_irreg_actuator_inputs}

        \end{minipage}

        \setcounter{subfigure}{-1}
        \caption{CBF behavior for the crewed landing module and an irregularly-shaped asteroid.}
        \label{fig:irreg_conf2_combined}
    \end{subfigure}
\end{example}

\section{Conclusions}\label{sec:Conclusions}

This article presents a control architecture for safe soft-landing maneuvers of overactuated and fully actuated spacecraft on small celestial bodies under gravitational uncertainty. Several key contributions enable simultaneous position and attitude control while satisfying state and input constraints.

The constraint projection method presented in \cref{sec:Constraint_Projection} is a key enabler to re-frame the actuator constraint $u \in \SU$ as a constraint on the forces and torques $\mu \in \Phi$. Thus, allowing to focus on designing control forces and torques instead of individual actuator commands. Moreover, the allocation presented in \cref{eq:opt_allocation} provides a mapping from the forces and torques to the actuator inputs given by a quadratic program which \cref{prop:feasibility} proves to be feasible and has a unique solution. It is worth noting that other allocation algorithms may be used to perform such mapping provided that they enforce the constraint $u \in \SU$. 

The gravitational uncertainty estimation $\hat w$ presented in \cref{sec:FeedbackControl} is an essential element of the control architecture. It is not only used to compensate for the disturbance in \cref{eq:Fd}. But also, by leveraging on $\delta$ and \Cref{prop:Dist_est_conv}, we design a computable lower bound of the CBF constraint function $\cref{eq:b}$ as shown in \Cref{prop:ubar_b_lower_b}. Finally in \cref{sec:CBFControl}, we present a minimum-intervention control which is shown to be continuous in \cref{prop:safe_control_properties} and \cref{thm:safety} shows that it satisfies \ref{Obj:safety} and \ref{Obj:actuator_input} while producing a control that approximates the desired control \cref{eq:ud}, which is designed to achieve \ref{Obj:landing}.

The simulation results demonstrate that the proposed method enables successful landings under significant gravitational uncertainty where traditional feedback control fails. Furthermore, the closed-form nature of the controller ensures the computational efficiency required for autonomous onboard implementation.

A limitation of the polytope projection method is the growth in the number of half-spaces $\ell$, which depends on the geometry of $\SU$ and the allocation matrix $M$. We note that the projection of any convex set through the affine mapping \eqref{eq:allocation} is a convex set, indicating that the input constraint projection method can be extended to more general convex sets. However, finding efficient representations for $\Phi$ when $\SU$ is a non-polytopic convex set is open for future work.

\newpage

\appsection{Proofs of Proposition \ref{prop:Dist_est_conv} and \ref{lemma:position_control_stability}}\label{app:Proofs}

\begin{proof}[Proof of \Cref{prop:Dist_est_conv}]
Since $\tilde w(x,\nu) = w(x) - \tau v -\nu$, it follows that the time derivative of \cref{eq:xi} along \cref{eq:system_dynamics,eq:system_dynamics.2,eq:system_dynamics.3,eq:system_dynamics.4,eq:EHGO_Dynamics,eq:est_error_var_dyn} is
    \begin{align}
        \dot \xi &= \dot \delta - 2 \tilde w^\top ( \dot w - \tau \dot v - \dot \nu ) \nn\\
        & = -\tau \delta + \frac{c_\delta}{\tau}\|v\|^2 - 2 \tilde w^\top \dot w + 2 \tau \|\tilde w\|^2. \label{eq:xi_dot}
    \end{align}
    %
Since $0 \le  \| \sqrt{\tau} \tilde w - \dot w / \sqrt{\tau} \|^2 = \tau \| \tilde w \|^2 - 2\tilde w^\top \dot w + \| \dot w \|^2 /\tau$, it follows that 
\begin{equation*}
    \tau \| \tilde w \|^2 - 2 \tilde w^\rmT \dot w  \ge - \frac{\| \dot w \|^2}{ \tau},
\end{equation*}
which combined with \cref{eq:xi_dot} implies that 
 \begin{equation}
\dot \xi \ge -\tau \delta + \frac{c_\delta}{\tau}\|v\|^2 - \frac{\| \dot w \|^2}{ \tau} +\tau \| \tilde w \|^2. \label{eq:xi_dot.2}
\end{equation}
Next, since $w(x) = \Delta(r)$, it follows that $\dot w = \Delta^\prime(r) v$.
Moreover, it follows from Assumption~\ref{ass:Bounded_Dist} that there exists $c_\delta > 0$ such that $c_\delta \ge \| \Delta^\prime(r) \|^2$.
Thus, it follows that $\| \dot w \|^2 \le \| \Delta^\prime(r)\|^2 \|v\|^2 \le c_\delta \|v\|^2$, which combined with \cref{eq:xi_dot.2} yields 
 \begin{equation}
\dot \xi \ge -\tau \delta + \tau \|\tilde w\|^2 = -\tau \xi. \label{eq:xi_dot.3}
\end{equation}
Since $\xi(x_0, \nu_0, \delta_0) \geq 0$, it follows from \cref{eq:xi_dot.3} that for all $t\geq 0$, $\xi(x(t), \nu(t), \delta(t)) \geq 0$.
\hfill $\blacksquare$
\end{proof}
\begin{proof}[Proof of \Cref{lemma:position_control_stability}]
It follows from \cref{eq:system_dynamics,eq:system_dynamics.2,eq:system_dynamics.3,eq:system_dynamics.4,eq:EHGO_Dynamics,eq:est_error_var_dyn} that
    \begin{equation}\label{eq:error_system}
        \dot{z}(t) = A z(t) + B\tilde{w}(x(t), \nu(t)),
    \end{equation}
    where
    \begin{equation*}
        z(t) \triangleq \begin{bmatrix} e_{r}(t,x(t)) \\ e_v(t,x(t)) \end{bmatrix}, \qquad A = \begin{bmatrix} 0_{3 \times 3} & I_{3 \times 3} \\ -K_r & -K_v \end{bmatrix}, \qquad B = \begin{bmatrix} 0_{3 \times 3} \\ I_{3 \times 3} \end{bmatrix}.
    \end{equation*}
Since $K_r$ and $K_v$ are positive definite, it follows that $A$ is asymptotically stable. 
Let $P \in \BBR^{6 \times 6}$ be the positive-definite solution to $A^\top P + PA = - I$, and define $V(t) \triangleq z(t)^\top Pz(t)$. 
The time derivative of $V$ along the trajectories of \cref{eq:error_system} is
    \begin{equation}\label{eq:V_dot}
        \dot{V} = -z^\top z + 2z^\top PB\tilde{w}.
    \end{equation}
Next, let $\varsigma \in (0,1)$ be such that $\varsigma^2 \neq 1 - \tau \lambda_{\max}(P)$, and it follows from \cref{eq:V_dot} that
    \begin{align}
        \dot{V} =& -z^\top z -\big(\varsigma z - \frac{1}{\varsigma}PB\tilde{w}\big)^\top \big(\varsigma z - \frac{1}{\varsigma}PB\tilde{w}\big) + \varsigma^2 z^\top z + \frac{1}{\varsigma^2}\tilde{w}^\top B^\top P P B \tilde{w}, \notag \\
        \leq& -(1 - \varsigma^2)z^\top z + \frac{\lambda_{\max}(B^\top PPB)}{\varsigma^2} \tilde{w}^\top \tilde{w}. \label{eq:V_dot_bound}
    \end{align}
Since $V \leq \lambda_{\max}(P)z^\top z$, it follows that $-z^\top z \leq - {V}/{\lambda_{\max}(P)}$, which combined with \cref{eq:V_dot_bound} yields
    \begin{equation}
        \dot{V} \leq -\frac{(1-\varsigma^2)}{\lambda_{\max}(P)}V + \frac{\lambda_{\max}(B^\top PPB)}{\varsigma^2} \tilde{w}^\top \tilde{w}. \label{eq:V_dot_bound_1}
    \end{equation}
Since $\xi(x_0, \nu_0, \delta_0) \geq 0$, \cref{prop:Dist_est_conv,eq:V_dot_bound_1} imply that
    \begin{equation}
        \dot{V} \leq -\frac{(1-\varsigma^2)}{\lambda_{\max}(P)}V + \frac{\lambda_{\max}(B^\top PPB)}{\varsigma^2} \delta. \label{eq:V_dot_bound_2}
    \end{equation}
Since the solution to \cref{eq:est_error_var_dyn} is
    \begin{align*}
        \delta(t) &= \delta_0e^{-\tau t} + \frac{c_{\delta}}{\tau} \int_0^t e^{-\tau (t-s)}\|v(s)\|^2 \, {\rm d}s \notag\\
        &\leq \delta_0e^{-\tau t} + \frac{c_{\delta}}{\tau} \int_0^t e^{-\tau (t-s)}v_{\rm m}^2 \, {\rm d}s, \notag \\
        &= \delta_0e^{-\tau t} + \frac{c_{\delta}v_{\rm m}^2}{\tau^2}(1-e^{-\tau t}), 
    \end{align*}
substituting into \cref{eq:V_dot_bound_2} yields
\begin{equation}\label{eq:V_dot_bound_3}
\dot{V}(t) \leq -c_1V(t) + \frac{c_2}{\tau^2} + c_3 e^{- \tau t},
\end{equation}
where 
\begin{equation*}
c_1 = \frac{(1-\varsigma^2)}{\lambda_{\max}(P)}, \qquad 
c_2 = \frac{\lambda_{\max}(B^\top PPB) c_{\delta} v_{\rm m}^2}{\varsigma^2}, \qquad 
c_3 = (\delta_0 - \frac{c_{\delta} v_{\rm m}^2}{\tau^2}) \frac{\lambda_{\max}(B^\top PPB)}{\varsigma^2}.
\end{equation*}
Next, consider $\bar{V}: [0,\infty) \to \BBR$ that satisfies
    \begin{equation} \label{eq:V_bar_dot}
        \dot{\bar{V}}(t) = -c_1\bar{V}(t)  + \frac{c_2}{\tau^2} + c_3 e^{- \tau t},
    \end{equation}
where $\bar{V}(0) = V(0)$.
Since $\varsigma^2 \neq 1 - \tau \lambda_{\max}(P)$, it follows that $c_1 \neq \tau$, which implies that the solution to \cref{eq:V_bar_dot} is
\begin{equation}\label{eq:Vbar_solution}
\bar{V}(t) = \bar{V}(0)e^{-c_1t} + \frac{c_2}{c_1\tau^2} (1 - e^{-c_1t}) + \frac{c_3}{c_1-\tau} (e^{-\tau t} - e^{-c_1t}).
\end{equation}
Using the comparison lemma \cite[Lemma~3.4]{khalil_nonlinear_2002} on \cref{eq:V_bar_dot,eq:V_dot_bound_3} yields $V(t) \leq \bar{V}(t)$.
Since, in addition, $\lambda_{\min}(P) z^\top z \leq V$, it follows that for all $t \geq 0$, $z(t)^\top z(t) \leq \bar{V}(t)/\lambda_{\min}(P)$.
Finally, since \cref{eq:Vbar_solution} implies that $\lim_{t \to \infty} \bar{V}(t) = \frac{c_2}{c_1\tau^2}$, letting $\vartheta \triangleq \sqrt{\frac{c_2}{c_1 \lambda_{\min}(P)}}$ yields that $\| z(t) \| \to \left [ 0, \frac{\vartheta}{\tau} \right ]$ as $t \to \infty$.
\hfill $\blacksquare$
\end{proof}

\bibliographystyle{Frontiers-Harvard}
\bibliography{references}

\end{document}